\documentclass[12pt]{article}
\usepackage[english]{babel} 
\usepackage[utf8]{inputenc} 
\usepackage{marvosym}
\usepackage{enumitem}
\usepackage[scaled]{helvet}
\usepackage{hyperref}
\usepackage{mathrsfs,amssymb}
\usepackage[intlimits]{empheq}
\mathchardef\Rightarrow="3229
\usepackage{stmaryrd}
\newfont{\suet}{suet14}
\newfont{\schwell}{schwell}
\DeclareTextFontCommand{\textsuet}{\suet}
\DeclareTextFontCommand{\textschwell}{\schwell}
\usepackage{xcolor}
\usepackage{framed}
\usepackage{dsfont}
\usepackage{amsthm}
\usepackage{amsfonts}
\usepackage{subfig}
\usepackage{graphicx}
\usepackage{epstopdf}

\theoremstyle{definition}
\newtheorem{Def}{Definition}

\theoremstyle{plain}
\newtheorem{Lem}[Def]{Lemma}

\theoremstyle{plain}
\newtheorem{The}[Def]{Theorem}

\theoremstyle{plain}

\theoremstyle{remark}
\newtheorem{Rem}[Def]{Remark}

\theoremstyle{definition}
\newtheorem*{def*}{Definition}

\numberwithin{equation}{section}
\DeclareMathOperator{\ran}{Ran}

\newcommand{\footnoteremember}[2]{\footnote{#2}\newcounter{#1}%
\setcounter{#1}{\value{footnote}}
}
\newcommand{\footnoterecall}[1]{%
\footnotemark[\value{#1}]
}

\begin{document}

\title{On the minimax principle for Coulomb--Dirac operators}
\author{Sergey Morozov\footnoteremember{1}{Mathematisches Institut, LMU M\"unchen, Theresienstr. 39, 80333 Munich, Germany}\footnote{E-mail: morozov@math.lmu.de} \and David M\"uller\footnoterecall{1}\footnote{E-mail: dmueller@math.lmu.de}}
\date{}

\maketitle

\begin{abstract}
Let $q$ and $v$ be symmetric sesquilinear forms such that $v$ is a form perturbation of $q$. Then we can associate a unique self-adjoint operator $B$ to $q+ v$. Assuming that $B$ has a gap $(a,b)\subset\mathbb{R}$ in the 
essential spectrum, we prove a minimax principle for the eigenvalues of $B$ in $(a, b)$ using a suitable orthogonal decomposition of the domain of $q$. This allows us to justify two minimax characterisations of eigenvalues in the gap of three--dimensional Dirac operators with electrostatic potentials having strong Coulomb singularities.
\end{abstract}

\section{Introduction and main results}

\subsection{General discussion}

Since the early days of quantum mechanics the Dirac operators with potentials having a Coulomb singularity are used to describe relativistic electrons in atomic fields. We say that a measurable Hermitian $4\times 4$--matrix function $V$ on $\mathbb R^3$ belongs to the class $\mathcal P_{\nu}$, if for some $\widetilde\nu\in [0, \nu)$ the inequalities
\begin{align}\label{ad_pot}
       0\geqslant V(x) \geqslant -\frac{\widetilde\nu}{|x|}\mathds{1}_{\mathbb C^4}\quad \text{hold for almost every}\ x \in \mathbb{R}^3.
\end{align}
If $V\in\mathcal P_1$ and \eqref{ad_pot} is satisfied with $\widetilde\nu= \nu$, we say that $V\in \overline{\mathcal P}_\nu$.

Let $H_0$ be the free Dirac operator (see Appendix). If $V\in\mathcal P_1$, one can define a physically meaningful self-adjoint operator $H$ formally corresponding to $H_{0}+V$, see Subsection~\ref{applications subsection} below. For the essential spectra we have (see \cite{Nen1976})
\[
 \sigma_{\text{ess}}(H)= \sigma_{\text{ess}}(H_0)= (-\infty, -1]\cup[1, \infty).
\]
The eigenvalues of $H$ in $(-1,1)$ are of particular interest; for example, the lowest eigenvalue $\lambda_1$ in this gap is interpreted as the ground state energy of the electron.

In the rest of this subsection we assume that $V$ is an electric potential, i.e., is proportional to $\mathds{1}_{\mathbb C^4}$.

Talman \cite{Talman} and Datta and Devaiah \cite{Datta} proposed a formal minimax characterisation of $\lambda_1$:
\begin{align*}
             \lambda_1 = \min_{x\in \ran T_{+}}\max_{y\in \ran T_{-}}\frac{\langle x+y,(H_0 +V)(x+y)\rangle}{\| x+y \|^2}.
\end{align*}
Here $T_\pm$ are the projectors on the upper and lower two components of 4--spinors, i.e.,
\begin{equation}\label{T_pm}
 T_+\binom{\varphi}{\psi}:= \binom{\varphi}{0}, \quad T_-\binom{\varphi}{\psi}:= \binom{0}{\psi}, \quad\text{for}\quad \varphi, \psi\in\mathsf L^2(\mathbb R^3, \mathbb C^2).
\end{equation}
Esteban and S{\'e}r{\'e} \cite{multiplicity} replaced $T_\pm$ by the spectral projectors of the unperturbed Dirac operator $H_0$
\begin{equation}\label{Ps}
 P_+:= P_{H_0}\big([1, \infty)\big), \quad P_-:= P_{H_0}\big((-\infty, -1]\big)
\end{equation}
and announced that for $V\in\mathcal P_{1/2}$ the $k^{\textrm{th}}$ eigenvalue in the gap (counted from below with multiplicity) coincides with the minimax level
\begin{align}
             \lambda_k = \inf_{\substack{\mathfrak V \textrm{{ subspace of }}P_+\mathsf H^{1/2}(\mathbb{R}^3, \mathbb{C}^4) \\ \dim\mathfrak V=k}}\sup\limits_{x\in(\mathfrak V\oplus 
              P_-\mathsf H^{1/2}(\mathbb{R}^3, \mathbb{C}^4))\setminus \{ 0 \}}\frac{h_0[x]+v[x]}{\|x\|^2},\label{lambda_k}
\end{align}
where $h_0$ and $v$ are the quadratic forms of $H_0$ and $V$, respectively.

A general result on the variational characterisation of the eigenvalues of operators with gaps in the essential spectrum was proved by Griesemer and Siedentop \cite{Min1999}. As a corollary they found that the variational characterisation of the lowest eigenvalue by Talman, Datta, and Devaiah is correct for $-2\mathds1_{\mathbb C^4}< V\leqslant 0$ provided $V(x)\to 0$ as $|x|\to \infty$. Griesemer, Lewis, and Siedentop \cite{Lewis} proved that the approach of \cite{multiplicity} holds for $V\in P_{\gamma}$ where $\gamma\approx 0.3$ is the real solution of $2\gamma^3-3\gamma^2+4\gamma=1$. Dolbeault, Esteban and S{\'e}r{\'e} \cite{Maria}  extended the result of \cite{multiplicity} to a class of $V$ which, under an extra assumption of slow decay at infinity, contains $\mathcal P_{2/(2/\pi+ \pi/2)}$.
In \cite{Gap2000}, the same authors have claimed the validity of both Esteban--S{\'e}r{\'e} and Talman--Datta--Devaiah minimax principles for $V\in \mathcal P_1$. However, they replaced $P_\pm\mathsf H^{1/2}(\mathbb{R}^3, \mathbb{C}^4)$ in \eqref{lambda_k} by $P_\pm C_0^\infty(\mathbb R^3, \mathbb C^4)$ (accordingly, $T_\pm C_0^\infty(\mathbb R^3, \mathbb C^4)$), and their argument relies on the statement that $C_0^\infty(\mathbb R^3, \mathbb C^4)$ is an operator core for $H$, which is only true for $V\in \overline{\mathcal P}_{\sqrt3/2}$, see Theorem~2.1.6 of \cite{BalinskyEvans}.

Trying to overcome this difficulty we have returned to the minimax principle \eqref{lambda_k}. The corresponding abstract formulation, which is the main result of our paper, naturally applies to self--adjoint operators obtained as form perturbations of symmetric sesquilinear forms.  Moreover, we only deal with the domain of the unperturbed quadratic form. In the case of Dirac operators we prove the minimax characterisation of eigenvalues \eqref{lambda_k} for all $V\in \mathcal P_1$ and a version of the Talman--Datta--Devaiah minimax principle for $V\in \overline{\mathcal P}_{2/(2/\pi+ \pi/2)}$. Our proofs are based on the ones of \cite{Gap2000} and \cite{Min1999}, but we consistently work with forms instead of operators.

The main abstract result of the article is explicitly formulated in Subsection~\ref{abstract subsection}, and the applications to Dirac operators can be found in Subsection~\ref{applications subsection}. In Section~\ref{form perturbation section} we give the definition of form perturbations, which is the key element in the construction of the operators we study. The rest of the paper contains proofs. In the appendix the (very standard) definition of the free Dirac operator is given for convenience.

Throughout the text for any sesquilinear form $f: \mathfrak Q\times \mathfrak Q\to \mathbb C$ (linear in the second argument) we say that $f$ is defined on $D[f]:= \mathfrak Q$. The corresponding quadratic form is defined on $\mathfrak Q$ by $f[x]:= f[x,x]$. If we start from a quadratic form $f$ on $D[f]$, then the corresponding sesquilinear form is naturally defined on $D[f]$ by
\[
 f[x,y]= \frac14\big(f[x+y]- f[x-y]- \text{i}f[x+ \text{i}y]+ \text{i}f[x- \text{i}y]\big).
\]
For a linear operator $A$ its domain is denoted by $D(A)$.

\subsection{The abstract minimax principle}\label{abstract subsection}

In order to treat the Dirac operators with strong Coulombic singularities, Nenciu \cite{Nen1976} has introduced the concept of \emph{form perturbations} of self--adjoint operators, which generalises the pseudo--Friedrichs extension of Kato (\cite{Kato}, VI.3.4). We will slightly modify this definition and introduce form perturbations of symmetric sesquilinear forms in Section~\ref{form perturbation section}. The concept of form perturbation is needed for the following theorem:
\begin{The}\label{B theorem}
Let a symmetric sesquilinear form $v$ be a form perturbation of a symmetric sesquilinear form $q$. Then there exists a unique self--adjoint operator $B$ satisfying the conditions 
\begin{align}
       \text{\emph{(j)}}&\quad D(B) \subset D[q]; \label{the1}\\
       \text{\emph{(jj)}}&\quad \langle Bx,y\rangle = q[x,y] + v[x,y] \quad \text{for all }\ x \in D(B), \ y\in D[q]. \label{the2}
\end{align}
Moreover,
\begin{align}
           D(B)=\bigg\{ x\in D[q]: \sup\limits_{y\in D[q]\setminus \{0\}}\frac{\big|q[x,y]+ v[x,y]\big|}{\| y \|}<\infty \bigg\}.
           \label{form6}
\end{align}
\end{The}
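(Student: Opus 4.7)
The plan is to prove uniqueness first by a standard adjoint argument, then to invoke the structure provided by the definition of form perturbation to construct $B$, and finally to derive the explicit domain description \eqref{form6} via the Riesz representation theorem.

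For uniqueness, suppose $B_1$ and $B_2$ both satisfy (j) and (jj). For any $x\in D(B_1)$ and $y\in D(B_2)$, applying (jj) to each operator and using the symmetry of $q+v$ yields
\[
 \langle B_1 x,y\rangle=q[x,y]+v[x,y]=\overline{q[y,x]+v[y,x]}=\overline{\langle B_2 y,x\rangle}=\langle x,B_2 y\rangle,
\]
so $B_1\subset B_2^{*}=B_2$, and since both operators are self-adjoint, $B_1=B_2$.

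For existence, I would rely on the definition of form perturbation given in Section~\ref{form perturbation section}. The expected picture is that this definition endows $D[q]$ with an equivalent (complete) norm, or provides an orthogonal-type decomposition, making $q+v$ coercive or sectorial after a suitable shift, even without any semiboundedness of $q$ itself. From this, a representation theorem of Kato/Nenciu type produces a self-adjoint operator $B$ with $D(B)\subset D[q]$ such that $\langle Bx,y\rangle=q[x,y]+v[x,y]$ for every $x\in D(B)$ and every $y\in D[q]$.

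For the characterisation of $D(B)$, let $D'$ denote the set on the right-hand side of \eqref{form6}. The inclusion $D(B)\subset D'$ is immediate from (jj) and the Cauchy--Schwarz inequality. Conversely, if $x\in D'$, then the antilinear functional $y\mapsto \overline{q[x,y]+v[x,y]}$ is bounded on $D[q]$ in the ambient norm, and since $D[q]$ is dense in the Hilbert space it extends by continuity to the whole space and is represented via Riesz by a unique vector $z$. Using $\langle z,y\rangle=q[x,y]+v[x,y]$ for every $y\in D[q]$ and running the symmetry argument from the uniqueness step against $B$ one forces $x\in D(B)$ and $Bx=z$.

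The main obstacle will be the existence step, because it depends entirely on how form perturbation is defined in Section~\ref{form perturbation section} and on whether that definition is strong enough to deliver a self-adjoint, rather than merely symmetric, operator in the absence of semiboundedness of $q$. This non-semibounded setting is precisely what is needed for the Dirac application with strong Coulomb singularities, and is the motivation for Nenciu's framework.
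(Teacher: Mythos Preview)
Your uniqueness argument and your derivation of \eqref{form6} are correct and standard: once a self-adjoint $B$ satisfying (j), (jj) exists, the Riesz vector $z$ you produce for $x\in D'$ satisfies $\langle z,y\rangle=\langle x,By\rangle$ for all $y\in D(B)$, hence $x\in D(B^*)=D(B)$. The paper itself does not spell out a proof but declares it identical to that of Theorem~2.1 in Nenciu \cite{Nen1976}, so on these two points you are doing exactly what the cited argument does.

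The only place your outline drifts from the actual mechanism is the existence step. The form $q+v$ is \emph{not} rendered coercive or sectorial by a scalar shift; for genuinely sign-indefinite $q$ (as in the Dirac case) no such shift exists. What Definition~\ref{form perturbation} supplies instead is the Hilbert space $\mathfrak Q_\alpha=(D[q],\langle\cdot,\cdot\rangle_\alpha)$, the sign operator $U=P_+-P_-$, and the bounded symmetric operator $V_\alpha$ on $\mathfrak Q_\alpha$ representing $v$, together with the crucial hypothesis~\ref{viii} that $U+V_\alpha$ is boundedly invertible on $\mathfrak Q_\alpha$ for large $\alpha$. Using \ref{0}--\ref{iii} one checks
\[
\langle (U+V_\alpha)x,y\rangle_\alpha=q[x,y]+v[x,y]+\alpha\langle Ux,y\rangle,
\]
and it is the invertibility of $U+V_\alpha$ (not any coercivity of $q+v$) that Nenciu uses to build the resolvent of $B$ and hence $B$ itself, with self-adjointness following from the symmetry of $q+v$. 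So the obstacle you correctly flag is resolved by hypothesis~\ref{viii}, not by a Kato-type sectoriality argument; apart from this mischaracterisation, your plan matches the paper's route.
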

The proof of Theorem~\ref{B theorem} is identical to the one of Theorem~2.1 of \cite{Nen1976}.

Our main result is a minimax principle for the eigenvalues of $B$ in the gaps of its essential spectrum $\sigma_{\textrm{ess}}(B)$:

\begin{The}\label{c3}
Let a symmetric sesquilinear form $v$ be a form perturbation of a symmetric sesquilinear form $q$. Let $\mathfrak H_\pm$ be orthogonal subspaces of $\mathfrak H$ such that $\mathfrak H=\mathfrak H_{+} \oplus \mathfrak H_{-} $ and $ \Lambda_{+}$, $\Lambda_{-}$ 
the projectors onto $\mathfrak H_{+}$ and $\mathfrak H_{-}$, respectively. We assume that
\begin{align}
      \text{\emph{(i)}}&\quad \mathfrak D_{\pm}:=\Lambda_{\pm}D[q] \subset D[q];\label{the3}\\
      \text{\emph{(ii)}}& \quad  a:=\sup\limits_{x \in \mathfrak D_{-}\setminus \{ 0\}}\frac{q[x]+v[x]}{\|x\|^2}<\infty ;\label{the4}\\
      \text{\emph{(iii)}}&\quad \lambda_1 > a,\quad\text{where} \label{the6} \\
      &\quad\lambda_{k}:=\inf_{\substack{\mathfrak V \textrm{\emph{ subspace of }}\mathfrak D_{+} \\ \dim\mathfrak V=k}}\sup\limits_{x\in(\mathfrak V\oplus 
              \mathfrak D_{-})\setminus \{ 0 \}}\frac{q[x]+v[x]}{\|x\|^2}.\label{the5} 
\end{align}
Let $B$ be the self--adjoint operator defined in Theorem~\ref{B theorem} and \[b:=\inf\big(\sigma_{\mathrm{ess}}(B)\cap(a,\infty)\big)\in [a,\infty].\] For $ k\in \mathbb N$, we denote by $ \mu_k$ the
$k^{\mathrm{th}}$ eigenvalue of $B$ in the interval $(a,b)$ in non-decreasing order, counted with multiplicity, if such eigenvalue exists. If there is no $k^{\mathrm{th}}$ eigenvalue, we let $ \mu_k :=b.$

 Then
\begin{equation}
       \lambda_k = \mu_k \quad\text{for all}\quad k\in \mathbb N. \label{the7}
\end{equation}
\end{The}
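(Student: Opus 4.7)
I would adapt the Griesemer--Siedentop approach \cite{Min1999} to the form setting, proving \eqref{the7} via the equivalent counting-function identity
\begin{equation*}
N(\lambda) := \#\{k\in\mathbb{N} : \lambda_k < \lambda\} = \dim\ran P_B((-\infty,\lambda)) \qquad \text{for every } \lambda \in (a,b);
\end{equation*}
both sides determine $\lambda_k$ and $\mu_k$ respectively by thresholding, so this identity is equivalent to \eqref{the7}.

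\textbf{Reduction to a Schur-complement form on $\mathfrak D_+$.} Fix $\lambda\in(a,b)$. By (ii) the form $(q+v-\lambda)[\cdot]$ is bounded above by $(a-\lambda)\|\cdot\|^2$ on $\mathfrak D_-$, hence strictly negative there. Using closedness of $q+v$ on $D[q]$ (a consequence of $v$ being a form perturbation of $q$) together with strict concavity of $x_-\mapsto(q+v-\lambda)[x_++x_-]$, I would prove that for every $x_+\in\mathfrak D_+$ the supremum over $\mathfrak D_-$ is attained at a unique linear maximizer $L_\lambda(x_+)\in\mathfrak D_-$, characterised by the Euler--Lagrange identity
\begin{equation*}
(q+v)[x_++L_\lambda(x_+),y_-] = \lambda\,\langle x_++L_\lambda(x_+), y_-\rangle \qquad\text{for all } y_-\in\mathfrak D_-.
\end{equation*}
Setting $\varphi_\lambda[x_+]:=(q+v-\lambda)[x_++L_\lambda(x_+)]$, a direct computation then yields, for every subspace $\mathfrak V\subset\mathfrak D_+$,
\begin{equation*}
\sup_{0\ne x\in\mathfrak V\oplus\mathfrak D_-}\frac{(q+v)[x]}{\|x\|^2}<\lambda \iff \varphi_\lambda[x_+]<0 \text{ for every } x_+\in\mathfrak V\setminus\{0\}.
\end{equation*}
Thus $N(\lambda)$ equals the largest dimension of subspaces of $\mathfrak D_+$ on which $\varphi_\lambda$ is strictly negative.

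\textbf{Spectral identification by continuation.} It remains to identify this negative index of $\varphi_\lambda$ with $\dim\ran P_B((-\infty,\lambda))$. I would carry this out by a Birman--Schwinger-type continuation in the parameter $\mu\in(a,\lambda]$: the family $\mu\mapsto\varphi_\mu$ is pointwise decreasing in $\mu$, and by (iii) combined with the previous step its negative index vanishes for $\mu$ slightly above $a$ (as $\mu<\lambda_1$ rules out any one-dimensional subspace on which $\varphi_\mu<0$). As $\mu$ traverses $(a,\lambda]$, the negative index of $\varphi_\mu$ can jump only when $\mu$ coincides with an eigenvalue of $B$: indeed, if $Bw=\mu w$, then \eqref{the2} restricted to $y\in\mathfrak D_-$ forces $\Lambda_-w=L_\mu(\Lambda_+w)$ and $\varphi_\mu[\Lambda_+w]=0$, and an upward perturbation in $\mu$ converts this zero into a negative eigenvalue of $\varphi_\mu$; conversely, a vector $x_+$ with $\varphi_\mu[x_+]=0$ lifts via $x_+\mapsto x_++L_\mu(x_+)$ to an eigenvector of $B$ with eigenvalue $\mu$, via the characterisation \eqref{form6}. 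Matching multiplicities of the jumps yields the desired identity.

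\textbf{Main obstacle.} The technical heart is making the continuation rigorous. The hardest points are (a) showing that the lifted vector $x_++L_\mu(x_+)$ belongs to $D(B)$ whenever $\varphi_\mu[x_+]=0$, which requires combining the Euler--Lagrange identity on $\mathfrak D_-$ with the vanishing condition on $\mathfrak D_+$ to verify the uniform bound in \eqref{form6}; and (b) establishing sufficient regularity of $\mu\mapsto\varphi_\mu$ (a form-theoretic substitute for Kato's analytic perturbation theory of self-adjoint families) to count crossings and their multiplicities correctly. Both points rely in an essential way on the form-perturbation hypothesis: it furnishes the closedness required for the maximizer $L_\mu$ to exist and, through \eqref{form6}, bridges Schur-complement null vectors back to genuine eigenvectors of $B$.
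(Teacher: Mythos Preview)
Your Schur--complement strategy is close in spirit to the paper's proof, which adapts \cite{Gap2000} to the form setting. Two of your claims, however, are genuinely problematic, and they are precisely the points where the paper proceeds differently.

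First, the maximiser $L_\lambda(x_+)$ does not exist for every $x_+\in\mathfrak D_+$, nor need it land in $\mathfrak D_-$. The cross term $s[x_+,y_-]$ is controlled by $m_u[y_-]^{1/2}$ only for $x_+$ in a (generally proper) subspace $\mathfrak S\subset\mathfrak D_+$; for $x_+\notin\mathfrak S$ the supremum over $\mathfrak D_-$ is $+\infty$ and $\varphi_\lambda[x_+]$ is undefined. Even on $\mathfrak S$, the form $m_u$ need not be closed on $\mathfrak D_-$: the paper passes to the completion $\overline{\mathfrak D_-}$ of $\mathfrak D_-$ in $\mathfrak H_-$, and $L_u$ takes values there. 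Your appeal to ``closedness of $q+v$ on $D[q]$'' is not what the form--perturbation hypothesis delivers: condition \ref{v} makes $D[q]$ complete in the norm $\|\cdot\|_\alpha$ built from $P_\pm$, not from $\Lambda_\pm$, and this says nothing about $\mathfrak D_-$ being complete in the $m_u$ norm.

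Second, and more seriously, the lift $x_+\mapsto x_++L_\mu(x_+)$ of a null vector of $\varphi_\mu$ need not lie in $D[q]$ (since $L_\mu(x_+)\in\overline{\mathfrak D_-}$), so \eqref{form6} cannot even be invoked. The paper circumvents this entirely: rather than producing exact eigenvectors, it proves the identity
\[
(T^\#_{\lambda_k}x_+)(y_+)=\big\langle x_++L_{\lambda_k}x_+,\,(B-\lambda_k)y\big\rangle\qquad(y\in D(B),\ \Lambda_+y=y_+),
\]
bounds $h_{\lambda_k}^{1/2}[\Lambda_+y]$ by the graph norm $\|y\|_{D(B)}$, and then uses approximate null sequences of $T^\#_{\lambda_k}$ (obtained from the ordinary minimax principle for the semibounded operator $T_{\lambda_k}$ on a completion of $\mathfrak S$) to build Weyl sequences for $B$ at $\lambda_k$. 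This yields $\lambda_k\in\sigma(B)$ with multiplicity at least $w_k$ without ever asserting that $x_++L_{\lambda_k}x_+\in D(B)$. Your obstacle (a) is therefore not merely technical: in this generality the statement may simply be false, and the detour through approximate eigenvectors is essential.
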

The proof of Theorem~\ref{c3} can be found in Section~\ref{abstract proof section}.

\subsection{Application to Dirac operators with Coulomb singularities}\label{applications subsection}

In this subsection we elaborate and improve upon the results of \cite{Gap2000} and \cite{Min1999} using Theorem~\ref{c3}. In the following $h_0$ is the quadratic form of the free Dirac operator $H_0$ in $\mathsf L^2(\mathbb R^3, \mathbb C^4)$ with $D[h_0]= \mathsf H^{1/2}(\mathbb{R}^3, \mathbb{C}^4)$, see Appendix for more details. Let $V\in \mathcal P_1$, see \eqref{ad_pot}, and $v$ be the sesquilinear form of $V$.

It is shown in~\cite{Nen1976} that $v$ is a form perturbation of $h_0$ for $V\in \mathcal P_1$. Applying Theorem~\ref{B theorem} we define a unique self--adjoint operator $H$ in $\mathsf L^2(\mathbb{R}^3, \mathbb{C}^4)$ that satisfies
\begin{equation*}
      D(H)\subset \mathsf H^{1/2}(\mathbb{R}^3, \mathbb{C}^4)
\end{equation*}
and
\begin{equation*}
     \langle Hx,y \rangle = h_0[x,y]+v[x,y]\quad\text{for all }x\in D(H)\text{ and }y \in \mathsf H^{1/2}(\mathbb{R}^3, \mathbb{C}^4).
\end{equation*}
This construction of $H$ is by Nenciu~\cite{Nen1976} and coincides with the self--adjoint extensions constructed by Schmincke~\cite{Schmincke1972} and W\"ust~\cite{Wuest1975}.

We start with a minimax principle choosing $\Lambda_\pm$ to be the spectral projectors $P_\pm$ defined in \eqref{Ps}. 
\begin{The}\label{mainapp}
Let $h_0$, $v$ and $H$ be as defined above. Then the $k^{\mathrm{th}}$ eigenvalue $ \mu_k $ of $H$ in $(-1,1)$, counted from below with multiplicity, is given by
\begin{equation}
\mu_{k}=\inf_{\substack{\mathfrak V\text{\emph{ subspace of }}\mathfrak D_+ \\ \dim\mathfrak V=k}}\sup\limits_{x\in(\mathfrak V\oplus \mathfrak D_-)\setminus \{ 0 \}}\frac{h_0[x]+v[x]}{\|x \|^2},
\end{equation}
where $\mathfrak D_\pm:= P_\pm\mathsf H^{1/2}(\mathbb{R}^3, \mathbb{C}^4)$.
\end{The}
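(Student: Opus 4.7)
The strategy is to apply the abstract Theorem~\ref{c3} with $q:=h_0$, with $v$ the sesquilinear form of $V$, and with the orthogonal projectors $\Lambda_\pm:=P_\pm$ from \eqref{Ps}, so that $\mathfrak H_\pm=P_\pm\mathsf L^2(\mathbb R^3,\mathbb C^4)$ and $\mathfrak D_\pm=P_\pm\mathsf H^{1/2}(\mathbb R^3,\mathbb C^4)$. Since $v$ is already known from \cite{Nen1976} to be a form perturbation of $h_0$, the operator $B$ produced by Theorem~\ref{B theorem} coincides with the $H$ defined above the statement, and the task reduces to verifying hypotheses (i)--(iii) of Theorem~\ref{c3} and to identifying the constants $a=-1$ and $b=1$, so that $(a,b)$ becomes the spectral gap $(-1,1)$.

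Hypothesis~(i) is immediate from the functional calculus: $P_\pm$ are bounded Borel functions of $H_0$, commute with $(1+|H_0|)^{1/2}$, and therefore preserve $D[h_0]=D(|H_0|^{1/2})=\mathsf H^{1/2}(\mathbb R^3,\mathbb C^4)$. For (ii) and the value of $a$, the spectral theorem gives $h_0[x]\leqslant-\|x\|^2$ for every $x\in\mathfrak D_-$, while $V\leqslant 0$ yields $v[x]\leqslant 0$, so $a\leqslant-1$; conversely, applying $P_-$ to a sequence of semiclassical wave packets concentrated in momentum near zero and translated to spatial infinity (where the bound $|V|\leqslant\widetilde\nu/|x|$ forces $v$ to vanish asymptotically) produces values of $(h_0+v)[x]/\|x\|^2$ arbitrarily close to $-1$, so $a=-1$. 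Since $\sigma_{\mathrm{ess}}(H)=\sigma_{\mathrm{ess}}(H_0)=(-\infty,-1]\cup[1,\infty)$ by \cite{Nen1976}, this yields $b=\inf(\sigma_{\mathrm{ess}}(H)\cap(-1,\infty))=1$.

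The main step and principal obstacle is hypothesis~(iii), the strict inequality $\lambda_1>-1$. The key algebraic simplification is that $h_0[x_+,x_-]=0$ for $x_\pm\in\mathfrak D_\pm$ (because $P_\pm$ are spectral projectors of $H_0$), so only the cross term $2\mathop{\mathrm{Re}} v[x_+,x_-]$ in the potential part needs to be controlled. Fixing $x_+\in\mathfrak D_+\setminus\{0\}$, I would compute the supremum of $(h_0+v)[x_++x_-]/\|x_++x_-\|^2$ over $x_-\in\mathfrak D_-$ by solving the associated form-level Euler--Lagrange equation on $\mathfrak D_-$; the operator arising on the left-hand side of that equation, a perturbation of $P_-(H_0-\mu)P_-$ by $V$, is invertible on $\mathfrak H_-$ as long as $\mu>-1$. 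This reduces the supremum to a quadratic form depending only on $x_+$, and the Hardy--Dirac inequality of \cite{Gap2000}, valid for every $V\in\mathcal P_1$, bounds this reduced form strictly above $-\|x_+\|^2$ uniformly in $x_+$. Passing to the infimum over one-dimensional $\mathfrak V\subset\mathfrak D_+$ then gives $\lambda_1>-1$. The reason this must be carried out at the form level, rather than by the operator-theoretic argument of \cite{Gap2000}, is precisely that $C_0^\infty(\mathbb R^3,\mathbb C^4)$ need not be an operator core for $H$ when $V$ is strongly singular; working throughout on $\mathsf H^{1/2}$ and invoking Theorem~\ref{c3} circumvents this obstruction. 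Once (iii) is established, Theorem~\ref{c3} delivers $\mu_k=\lambda_k$ for every $k\in\mathbb N$, which is the claim.
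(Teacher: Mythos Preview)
Your overall strategy coincides with the paper's: apply Theorem~\ref{c3} with $q=h_0$ and $\Lambda_\pm=P_\pm$, verify (i)--(iii), and read off $b=1$ from $\sigma_{\mathrm{ess}}(H)$. Hypotheses (i) and (ii) are fine (the paper does not actually argue that $a=-1$ exactly; the bound $a\leqslant-1$ from $h_0[x_-]\leqslant-\|x_-\|^2$ and $V\leqslant0$ already suffices).

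The gap is in (iii). You invoke ``the Hardy--Dirac inequality of \cite{Gap2000}, valid for every $V\in\mathcal P_1$,'' but this is exactly the assertion in dispute: as the Introduction explains, the argument in \cite{Gap2000} relies on $C_0^\infty(\mathbb R^3,\mathbb C^4)$ being an operator core for $H$, which fails once $\widetilde\nu>\sqrt3/2$. One therefore cannot simply quote \cite{Gap2000} at the singular endpoint; the present paper is written precisely to repair this. The paper's route is a regularization that you omit entirely: reduce by monotonicity to the pure Coulomb potential $V_{\tilde\nu,0}$, replace it by the bounded $V_{\nu,\varepsilon}=-\nu/(|x|+\varepsilon)$, for which the First Step of Theorem~4.2 in \cite{Gap2000} legitimately gives $\lambda_1(V_{\nu,\varepsilon})\geqslant0$, and then pass to the limit $\varepsilon\searrow0$. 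The limit is taken first for $x_+\in\mathfrak C_+:=P_+C_0^\infty$, where the maximiser is $L_{\nu,\varepsilon}x_+=C_{\nu,\varepsilon}^{-1}P_-V_{\nu,\varepsilon}x_+$ with $C_{\nu,\varepsilon}$ the Friedrichs extension of $-h_0-v_{\nu,\varepsilon}$ on $\mathfrak D_-$; convergence of this maximiser in $\mathfrak H_-$ is obtained via a resolvent identity and dominated convergence. Then Kato's inequality yields the uniform bound $m_{\nu,\varepsilon}[L_{\nu,\varepsilon}x_+]\leqslant(\pi/2)\|x_+\|_{\mathsf H^{1/2}}^2$, which makes $g_{\nu,0}$ continuous in the $\mathsf H^{1/2}$-topology and permits the density extension from $\mathfrak C_+$ to all of $\mathfrak D_+$. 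This gives $\lambda_1\geqslant0>-1$ through the equivalence \eqref{Lem45}. Your Euler--Lagrange sketch points in the right direction but contains none of this limiting analysis, which is the actual content of the proof.
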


Another possible choice of $\Lambda_\pm$ are $T_\pm$, see \eqref{T_pm}. In this case we will have to further restrict the maximal admissible strength of the Coulomb singularity:
\begin{The}\label{mainapp2}
Let $h_0$, $v$ and $H$ be as defined above. Assume furthermore that $V\in \overline{\mathcal P}_{2/(2/\pi+ \pi/2)}$. Then the $k^{\mathrm{th}}$ eigenvalue $ \mu_k $ of $H$ in $(-1,1)$, counted from below with multiplicity, is given by
\begin{equation}
\mu_{k}=\inf_{\substack{\mathfrak V\text{\emph{ subspace of }}\mathfrak T_+ \\ \dim\mathfrak V=k}}\sup\limits_{x\in(\mathfrak V\oplus \mathfrak T_-)\setminus \{ 0 \}}\frac{h_0[x]+v[x]}{\|x \|^2},
\end{equation}
where $\mathfrak T_{\pm}:=T_\pm\mathsf H^{1/2}(\mathbb{R}^3, \mathbb{C}^4)$.
\end{The}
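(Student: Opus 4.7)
The plan is to apply the abstract minimax principle Theorem~\ref{c3} with $q=h_0$, the Coulomb form~$v$, and the orthogonal projectors $\Lambda_\pm=T_\pm$ from~\eqref{T_pm}. I must verify hypotheses~(i)--(iii) of Theorem~\ref{c3} and identify the gap $(a,b)$ produced there with~$(-1,1)$, after which the identification $\mu_k=\lambda_k$ is immediate from Theorem~\ref{c3}.

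Hypothesis~(i) is trivial: $T_\pm$ act as constant $4\times 4$ matrices on the spinor index and so commute with the scalar Fourier multiplier $\sqrt{-\Delta+1}$ that defines $\mathsf H^{1/2}(\mathbb R^3,\mathbb C^4)$; hence $\mathfrak T_\pm\subset\mathsf H^{1/2}=D[h_0]$. For~(ii) I would compute $h_0[x]$ directly on $\mathfrak T_-$. For $x=(0,\psi)^\top$ the off-diagonal part $\alpha\cdot p$ of $H_0$ sends $x$ into $\mathfrak T_+$, killing the cross term $\langle x,\alpha\cdot p\,x\rangle$, while $\beta$ contributes $-\|\psi\|^2$; thus $h_0[x]=-\|x\|^2$ first on the dense subspace $D(H_0)\cap\mathfrak T_-$ and then on all of $\mathfrak T_-$ by continuity of $h_0$ on $\mathsf H^{1/2}$. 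Combined with $v[x]\leqslant 0$ (since $V\leqslant 0$) this gives $a\leqslant -1$, and testing with $\psi$ supported far from the origin saturates the bound, so $a=-1$. Since $\sigma_{\mathrm{ess}}(H)=(-\infty,-1]\cup[1,\infty)$, this forces $b=\inf(\sigma_{\mathrm{ess}}(H)\cap(-1,\infty))=1$, matching the interval in Theorem~\ref{mainapp2}.

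The main obstacle is hypothesis~(iii), $\lambda_1>-1$, which amounts to a Hardy--Dirac-type inequality
\[
\inf_{\substack{\varphi\in\mathsf H^{1/2}(\mathbb R^3,\mathbb C^2)\\\varphi\neq 0}}\ \sup_{\psi\in\mathsf H^{1/2}(\mathbb R^3,\mathbb C^2)}\frac{h_0[(\varphi,\psi)^\top]+v[(\varphi,\psi)^\top]}{\|\varphi\|^2+\|\psi\|^2}>-1.
\]
For any $\mu>-1$ and any $\tilde\nu$ below the critical value, the form $h_0+v-\mu\|\cdot\|^2$ is strictly concave in $\psi$ at fixed~$\varphi$, so a Schur-complement calculation identifies the maximizer $\psi_\varphi=(1+\mu-V)^{-1}\sigma\cdot p\,\varphi$ and reduces~(iii) to a scalar functional inequality on the upper 2-spinor~$\varphi$ alone. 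The threshold $\tilde\nu=2/(2/\pi+\pi/2)$ is precisely the sharp constant for which that reduced inequality survives the limit $\mu\downarrow-1$; it emerges from optimally combining Kato's inequality $\||x|^{-1/2}\varphi\|^2\leqslant(\pi/2)\langle\varphi,|p|\varphi\rangle$ with its complementary bound carrying the factor~$2/\pi$. I would follow the line of argument of Dolbeault--Esteban--S\'er\'e~\cite{Gap2000} for this step, but carried out entirely on the form domain $\mathsf H^{1/2}$ instead of on a putative operator core, and with the possibly non-diagonal entries of $V\in\overline{\mathcal P}_{2/(2/\pi+\pi/2)}$ absorbed into the same Kato estimate. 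Once~(iii) is in hand, Theorem~\ref{c3} applied with $(a,b)=(-1,1)$ delivers the claimed identity $\mu_k=\lambda_k$ for every $k\in\mathbb N$.
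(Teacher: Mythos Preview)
Your overall plan---apply Theorem~\ref{c3} with $\Lambda_\pm=T_\pm$ and verify (i)--(iii)---coincides with the paper's, and your treatment of (i), (ii) and the identification $a=-1$, $b=1$ is correct.

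The difference lies entirely in hypothesis~(iii). You propose to maximise over $\psi$ explicitly via a Schur complement and then establish a Hardy--Dirac-type inequality on the upper component~$\varphi$, essentially redoing the analysis of \cite{Gap2000} on the form domain. The paper instead short-circuits this step: for each $\varphi\in\mathsf H^{1/2}(\mathbb R^3,\mathbb C^2)$ it writes down, via the Foldy--Wouthuysen symbol, the explicit lower component $\psi_0$ for which the $4$-spinor $(\varphi,\psi_0)^\top$ lies in $P_+\mathsf H^{1/2}(\mathbb R^3,\mathbb C^4)$, and then simply quotes Tix's positivity of the Brown--Ravenhall operator \cite{Tix} to obtain $h_0[(\varphi,\psi_0)^\top]+v[(\varphi,\psi_0)^\top]\geqslant 0$ for every $V\in\overline{\mathcal P}_{2/(2/\pi+\pi/2)}$. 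This single test vector already gives $\sup_\psi(\dots)\geqslant 0$, hence $\lambda_1\geqslant 0>-1$. Thus the paper reduces the Talman gap condition to an existing result in one line, whereas your route would re-derive that positivity by hand; your maximiser formula $\psi_\varphi=(1+\mu-V)^{-1}\sigma\cdot p\,\varphi$ is only formal on $\mathsf H^{1/2}$ and for matrix-valued $V$, so making it rigorous is genuine additional work. Note also that the threshold $2/(2/\pi+\pi/2)$ enters the paper's argument solely through the citation of \cite{Tix}, not through any limiting procedure $\mu\downarrow -1$ as you suggest.
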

The proofs of Theorems~\ref{mainapp} and \ref{mainapp2} can be found in Section~\ref{Dirac proofs section}.

\section{Form perturbations}\label{form perturbation section}

In this section we define the concept of form perturbations for symmetric sesquilinear forms.

Let $q$ be a symmetric sesquilinear form on a dense domain $D[q]$ in a complex Hilbert space $\mathfrak H$.
We assume that two orthogonal projections $P_\pm$ with $P_++ P_-= \mathds 1_{\mathfrak H}$ satisfy
\begin{enumerate}[label=(\arabic*)]
      \item \label{0} $P_\pm D[q]\subset D[q]$;
      \item \label{i} $q[x_{+}]>0$ for all $x_+\in P_+D[q]\setminus\{0\}$;
      \item $q[x_-]\leqslant 0$ for all $x_-\in P_-D[q]$;
      \item \label{iii}$q[x_+, x_-]=0$ for all $x_+\in P_+D[q]$ and $x_-\in P_-D[q]$.
\end{enumerate}
For $\alpha> 0$ we define the inner product in $D[q]$ by
\begin{equation}\label{form product}
 \langle x, y\rangle_{\alpha}:= q[P_+x, P_+y]- q[P_-x, P_-y]+ \alpha\langle x, y\rangle
\end{equation}
and assume that
\begin{enumerate}[resume,label=(\arabic*)]
 \item \label{v} $\mathfrak Q_\alpha:=\big(D[q], \langle \cdot, \cdot\rangle_\alpha\big)$ is a Hilbert space (i.e., is complete).
\end{enumerate}
Note that
\begin{equation}\label{alpha equivalence}
 \|\cdot\|_\alpha^2\leqslant \|\cdot\|_{\widetilde\alpha}^2\leqslant \frac{\widetilde\alpha}\alpha\|\cdot\|_\alpha^2 \quad\text{for} \quad\widetilde\alpha> \alpha> 0,
\end{equation}
so the topology of $\mathfrak Q_\alpha$ is independent of $\alpha> 0$.
We introduce
\[U:=\mathds{1}\oplus(-\mathds{1})\quad\text{ on }P_+\mathfrak H\oplus P_-\mathfrak H.\]
\begin{enumerate}[resume,label=(\arabic*)]
 \item Let $v$ be a symmetric sesquilinear form in $\mathfrak H$ with 
 $D[v]\supseteq D[q]$.\label{form1}
 \item We assume that $v$ is bounded on $\mathfrak Q_\alpha$, i.e. there exists a constant $C_{\alpha}>0$ such that
\begin{equation}
       \big|v[x,y]\big|\leqslant C_{\alpha} \|x \|_\alpha \|y \|_\alpha \quad \text{for all }x,y\in D[q]\label{form2}.
\end{equation}
\end{enumerate}
Then $v$ defines on $\mathfrak Q_\alpha$ a bounded self--adjoint operator $ V_{\alpha} $ by 
\begin{equation}
      \langle V_{\alpha}x,y\rangle_\alpha = v[x,y]\quad\text{for all }x,y \in D[q].\label{form3}
\end{equation}
Note that by \eqref{alpha equivalence} (7) holds (or not) for all $\alpha>0$ at the same time.
\begin{enumerate}[resume,label=(\arabic*)]
 \item\label{viii} At last, we assume that for $\alpha$ big enough the operator $U+V_{\alpha}$ has a bounded inverse in $\mathfrak Q_\alpha$.
\end{enumerate}
\begin{Def}\label{form perturbation}
 If the assumptions \ref{0}--\ref{viii} are satisfied, we say that $v$ is a form perturbation of $q$.
\end{Def}
\begin{Lem}\label{spectral projections lemma}
If $q$ is a sesquilinear form of a self--adjoint operator $Q$, then the assumptions \ref{0}--\ref{iii} are satisfied if and only if \[P_+= P_Q^+:= P_Q\big((0, \infty)\big), \qquad P_-= P_Q^-:= P_Q\big((-\infty, 0]\big),\]
where $P_Q(\Omega)$ is the spectral projector of $Q$ corresponding to a Borel set $\Omega\subset \mathbb R$.
\end{Lem}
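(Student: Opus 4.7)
The ``if'' direction is a direct application of the spectral theorem. With $D[q]=D(|Q|^{1/2})$ and $E_\lambda$ the resolution of identity of $Q$, both $P_Q^{\pm}$ commute with $|Q|^{1/2}$ and hence leave $D[q]$ invariant, yielding \ref{0}. Writing $q[x,y]=\int_{\mathbb R}\lambda\,d\langle E_\lambda x,y\rangle$, positivity of $q$ on $P_Q^+D[q]\setminus\{0\}$ and non-positivity on $P_Q^-D[q]$ are immediate from the supports of the respective spectral measures, and the $q$-orthogonality \ref{iii} follows from $E_\lambda P_Q^+P_Q^-=0$ for every $\lambda$.

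For the converse, suppose that some pair $P_\pm$ with $P_++P_-=\mathds 1_{\mathfrak H}$ satisfies \ref{0}--\ref{iii}. The plan is to first show that $P_+$ commutes with $Q$, and then to use the sign information in \ref{i} and in its non-strict counterpart to pin down $\ran P_\pm$. The key observation is that \ref{iii} makes $P_+$ symmetric with respect to $q$: for all $x,y\in D[q]$,
\[
q[P_+x,y]=q[P_+x,P_+y]=q[x,P_+y].
\]
Combining this identity with the standard form representation $D(Q)=\{y\in D[q]:|q[\cdot,y]|\leqslant C_y\|\cdot\|\}$ of the operator domain, I would conclude that $P_+D(Q)\subset D(Q)$ and $QP_+=P_+Q$ on $D(Q)$. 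Consequently $P_+$ commutes with every spectral projector of $Q$, in particular with $P_Q^\pm$.

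With commutation in hand, the rest is a sign argument. For any $x\in D[q]$, the vector $w:=P_+P_Q^-x=P_Q^-P_+x$ lies in $D[q]\cap \ran P_+\cap \ran P_Q^-$. Belonging to $P_+D[q]$ forces $q[w]>0$ unless $w=0$ by \ref{i}, whereas $w\in \ran P_Q^-$ gives $q[w]=\int_{(-\infty,0]}\lambda\,d\mu_w(\lambda)\leqslant 0$, so $w=0$. Since $D[q]$ is dense in $\mathfrak H$ and $P_+$ is bounded, $P_+$ vanishes on $\ran P_Q^-$. A symmetric argument, applied to $w':=P_-P_Q^+x$ and using that $q$ is strictly positive on $\ran P_Q^+\setminus\{0\}$ (by the spectral theorem) but non-positive on $P_-D[q]$ (by the third assumption), shows that $P_-$ vanishes on $\ran P_Q^+$. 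Combined with $P_++P_-=\mathds 1_{\mathfrak H}=P_Q^++P_Q^-$, these two inclusions $\ran P_Q^-\subset \ran P_-$ and $\ran P_Q^+\subset \ran P_+$ force $P_\pm=P_Q^\pm$.

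The step I expect to require the most care is the passage from $q$-symmetry of $P_+$ to genuine operator commutation $QP_+=P_+Q$; it rests on the form representation of $D(Q)$, a standard but non-trivial ingredient. Everything else reduces to elementary spectral calculus, subject to the small verification that $P_Q^\pm D[q]=\ran P_Q^\pm \cap D[q]$ is dense in $\ran P_Q^\pm$, which follows from continuity of $P_Q^\pm$ and density of $D[q]$ in $\mathfrak H$.
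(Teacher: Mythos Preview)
Your proof is correct and follows essentially the same route as the paper's: derive $q$-symmetry of $P_+$ from \ref{iii}, upgrade this to commutation of $P_+$ with $Q$ (and hence with $P_Q^\pm$), and then run the sign argument $q[P_\pm P_Q^\mp x]=q[P_Q^\mp P_\pm x]$ to force $P_\pm P_Q^\mp=0$. The paper compresses the commutation step into the single line ``$P_+QP_-=P_-QP_+=0$, hence $[P_+,Q]=0$,'' whereas you make the passage through the form characterisation of $D(Q)$ explicit; you also spell out the ``if'' direction, which the paper omits.
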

\begin{proof}
Since $q[x_{+},x_{-}]=0$ for all $x_{\pm}\in P_\pm D[q]$, $P_{+}QP_{-}=P_{-}QP_{+}=0$ holds. Hence $[P_{+},Q]=0$ and, therefore, $\big[P_{+},P_Q^\pm\big]=0$. We thus have
\[
 0\lesseqgtr q\big[P_{\pm}P_{Q}^\mp x\big]=q\big[P_{Q}^\mp P_{\pm}x\big]\lesseqgtr 0\quad\textrm{ for all }\quad x\in D[q].
\]
This implies 
\begin{equation*}
         P_{\pm}P_{Q}^\mp x= P_{Q}^\mp P_{\pm}x= 0 \quad\text{for all }x\in D[q].
\end{equation*}
Thus for every $x\in D[q]$
\[
 P_{Q}^\pm x= P_{Q}^\pm(P_+x +P_-x)= \big(1- P_{Q}^\mp\big)P_\pm x= P_\pm x.\qedhere
\]
\end{proof}
\begin{Rem}
If $q$ is a sesquilinear form of a self--adjoint operator $Q$, and $v$ is a form perturbation of $q$, then by Lemma~\ref{spectral projections lemma} $v$ is a form perturbation of $Q$ in the sense of Nenciu \cite{Nen1976}.
\end{Rem}

\section{Proof of the abstract minimax principle}\label{abstract proof section}

The inequality $\lambda_k \leqslant \mu_k$ for all $k\in\mathbb N$ follows from the proof of Theorem~1 of \cite{Min1999}. It remains to prove that $\lambda_k \geqslant \mu_k$. We follow the ideas of the proof of Theorem~1.1 of \cite{Gap2000}, but consistently work with forms instead of operators.

We first introduce a sesquilinear form
\begin{equation}\label{s}
s:= q+ v\qquad\text{on}\quad D[q]
\end{equation}
and
\begin{align}     
       s_{-}: \mathfrak D_{-}\times \mathfrak D_{-}\to \mathbb{C},\qquad s_-[x_-,y_-]:= -s[x_-,y_-].\label{dir2_1_6}
\end{align}
Furthermore, for $u> a$ let
\begin{align}   
          m_u: \mathfrak D_{-}\to [0, \infty),\qquad m_u[y_{-}]:= s_-[y_{-}]+ u\|y_-\|^2.\label{dir2_1_3}
\end{align}
By \eqref{the4}, $m_u^{1/2}$ are equaivalent norms in $\mathfrak D_{-}$.
We denote the completion of $\mathfrak D_{-}$ in $\mathfrak H$ with respect to $m_{a+1}^{1/2}$ by $\overline{\mathfrak D_-}$, and the unique continuous extensions of $m_u$
to $\overline{\mathfrak D_-}$ by $\overline{m_u}$. 
Since $s_{-}$ is continuous with respect to $m_{a+1}$, we can uniquely extend it to
\begin{equation}
      \overline{s_{-}}:\overline{\mathfrak D_-}\times \overline{\mathfrak D_-} \to \mathbb{C}.\label{dir2_1_7}
\end{equation}

For $x_{+}\in \mathfrak D_{+}$ and $u> a$ let
\begin{align}  
      \varphi_{u,x_{+}}: \mathfrak D_{-}\to \mathbb{R},\qquad \varphi_{u,x_{+}}(y_{-}):= s[x_++y_-]-u\|x_{+}+y_{-}\|^2.\label{dir2_1_5}
\end{align}
Then for $u>a$ and $x_{+}\in \mathfrak D_+$ we have
\begin{align}
           \sup\limits_{y_{-}\in \mathfrak D_-}\varphi_{u,x_{+}}(y_{-})= \sup\limits_{y_{-}\in 
           \mathfrak D_-}\big(s[x_{+}]-u\|x_{+}\|^2 + 2 \Re s[x_{+},y_{-}]-m_u[y_{-}]
           \big).\label{dir3_1_2}
\end{align}
Since the norms $m_u^{1/2}$ are equivalent to each other,
\[\sup\limits_{y_{-}\in \mathfrak D_-}\varphi_{u,x_{+}}(y_{-})<\infty\text{ for $u>a$ if and only if }x_{+}\in \mathfrak S,\]
where
\begin{align}
         \mathfrak S:=\bigg\{x_{+}\in \mathfrak D_+: \sup\limits_{y_{-}\in \mathfrak D_-\setminus
         \{0\}}\frac{\big|s[x_{+},y_{-}]\big|}{m^{1/2}_{a+1}[y_{-}]}<\infty \bigg\}\subset \mathfrak D_+.
\end{align}
For $x_{+}\in \mathfrak S$ and $u>a$, $s[x_{+},\cdot]$ extends to a linear bounded functional $s_{x_{+}}$ in the Hilbert space $(\overline{\mathfrak D_-}, \overline{m_u})$. Hence by the Riesz's theorem there exist a unique linear operator
\begin{equation}\label{dir3_1_1}
L_{u}: \mathfrak S \to\overline{\mathfrak D_-}\ \text{such that}\ s_{x_+}(y_-)= \overline{m_u}\big[L_ux_+, y_-\big]\text{ for all }y_-\in \overline{\mathfrak D_-}.
\end{equation}

Let $\overline{\varphi_{u,x_{+}}}$ be the unique continous extension of $\varphi_{u,x_{+}}$ to $\overline{\mathfrak D_-}$ for $x_{+}\in\mathfrak S$.
By \eqref{dir3_1_2} we have
\begin{equation}\label{sup simple}\begin{split}
 \sup\limits_{y_{-}\in \overline{\mathfrak D_-}}\overline{\varphi_{u,x_{+}}}(y_{-})= s[x_{+}]-u\|x_{+}\|^2 + \overline{m_u}\big[L_ux_+\big]- \inf_{y_{-}\in \overline{\mathfrak D_-}}\overline{m_u}\big[L_ux_+- y_-\big].
\end{split}\end{equation}
This obviously implies that $L_ux_+$ is the unique maximiser in \eqref{sup simple}.

\begin{Lem}
\begin{align}
      \lambda_{k}=\inf_{\substack{\mathfrak V \text{\emph{ subspace of }}\mathfrak S \\ 
       \dim\mathfrak V=k}}\sup\limits_{x\in(\mathfrak V\oplus \mathfrak D_-)\setminus \{ 0 \}}\frac{s[x]}{\|x\|^2}. \label{deflam} 
\end{align}
\end{Lem}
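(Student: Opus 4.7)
The right-hand side of \eqref{deflam} differs from the definition \eqref{the5} of $\lambda_k$ only in that the infimum now runs over $k$-dimensional subspaces of $\mathfrak S$ rather than of $\mathfrak D_+$. Since $\mathfrak S\subset \mathfrak D_+$, fewer subspaces are admissible on the right, so the inequality $\lambda_k\leqslant \text{RHS}$ is immediate, and the content of the lemma is the reverse inequality.

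The main observation is: if $\mathfrak V\subset \mathfrak D_+$ is any subspace containing some nonzero $x_+\notin \mathfrak S$, then
\[
\sup_{x\in (\mathfrak V\oplus \mathfrak D_-)\setminus\{0\}}\frac{s[x]}{\|x\|^2}=+\infty.
\]
Indeed, the characterisation recorded in the excerpt just before the definition of $\mathfrak S$ gives, for every $u>a$, some $y_-\in \mathfrak D_-$ with $\varphi_{u,x_+}(y_-)>0$. Since $x_+\perp y_-$ and $x_+\neq 0$, we have $\|x_++y_-\|^2>0$, so dividing the inequality $s[x_++y_-]-u\|x_++y_-\|^2>0$ by $\|x_++y_-\|^2$ yields $s[x_++y_-]/\|x_++y_-\|^2>u$. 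As $u>a$ is arbitrary and $x_++y_-\in (\mathfrak V\oplus \mathfrak D_-)\setminus\{0\}$, the supremum above is $+\infty$.

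With this in hand, the reverse inequality is short. If $\lambda_k=+\infty$ there is nothing to show. Otherwise, for every $\varepsilon>0$ pick a $k$-dimensional $\mathfrak V\subset \mathfrak D_+$ with
\[
\sup_{x\in (\mathfrak V\oplus \mathfrak D_-)\setminus\{0\}}\frac{s[x]}{\|x\|^2}\leqslant \lambda_k+\varepsilon<\infty.
\]
By the contrapositive of the observation, every nonzero vector of $\mathfrak V$ lies in $\mathfrak S$; since $\mathfrak S$ is a linear subspace of $\mathfrak D_+$, this forces $\mathfrak V\subset \mathfrak S$, so that $\mathfrak V$ is admissible in the infimum on the right of \eqref{deflam}. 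Hence the right-hand side is at most $\lambda_k+\varepsilon$, and $\varepsilon\downarrow 0$ completes the proof.

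I do not expect any serious obstacle here: the key equivalence between $x_+\in \mathfrak S$ and finiteness of $\sup_{y_-}\varphi_{u,x_+}$ is already justified in the text via the mutual equivalence of the norms $m_u^{1/2}$ for $u>a$, and the remainder of the argument is straightforward bookkeeping about subspaces and suprema.
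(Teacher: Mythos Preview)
Your argument is correct and follows essentially the same route as the paper's proof: both establish that any nonzero $x_+\in\mathfrak D_+\setminus\mathfrak S$ forces the inner supremum to be $+\infty$, so only subspaces $\mathfrak V\subset\mathfrak S$ can realise the infimum. The paper phrases this as the contrapositive (if $\sup_{x_-}s[x_++x_-]/\|x_++x_-\|^2<u$ for some $u>a$, then $\sup_{y_-}\varphi_{u,x_+}(y_-)<\infty$ and hence $x_+\in\mathfrak S$) and then simply asserts the reformulation, whereas you spell out the subspace bookkeeping and the $\varepsilon$--approximation explicitly; but the substance is identical.
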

\begin{proof}
If for $x_{+}\in \mathfrak D_+\setminus\{ 0\}$ there exists $u\in (a, \infty)$ such that
\[
\sup_{x_{-}\in \mathfrak D_-}\frac{s[x_{+}+x_{-}]}{\|x_{+}+x_{-}\|^2}< u,
\]
then by \eqref{dir2_1_6} and \eqref{dir2_1_3}
\begin{align*}
       0>\sup\limits_{y_{-}\in \mathfrak D_-}\frac{s[x_{+}+y_{-}]-u\|x_{+}+y_{-}\|^2}{\|x_{+}+y_{-}\|^2}
       \geqslant \frac{1}{\|x_{+}\|^2}\sup\limits_{y_{-}\in \mathfrak D_-}\varphi_{u,x_{+}}(y_{-})
\end{align*}
holds. But then $x_{+}\in\mathfrak S$ and we can reformulate \eqref{the5} as \eqref{deflam}.
\end{proof}

For $u>a$ we define
\begin{align}
       g_{u}&: \mathfrak S \to\mathbb{R}, \quad g_u[x_{+}]:= \overline{\varphi_{u,x_{+}}}(L_{u}x_{+});\label{g_u}\\       
       n_{u}&: \mathfrak S \to\mathbb{R}, \quad n_u[x_{+}]:= \|x_{+}\|^2+ \|L_{u}x_{+}\|^2.\label{n_u}
\end{align}
\begin{Lem}\label{c4}
 Assume that \eqref{the3} and \eqref{the4} are satisfied. If $ a<u<u'$, then
 \begin{align}
       &\|\cdot\| \leqslant n^{1/2}_{u'} \leqslant n^{1/2}_{u} \leqslant \frac{u'-a}{u-a}n^{1/2}_{u'}; \label{Lem41}\\
       &(u'-u)n_{u'}\le g_{u}-g_{u'}\leqslant (u'-u)n_{u}. \label{Lem42}
 \end{align}
 Moreover, for any $u > a$:
 \begin{align}
       &\lambda_1 > u \quad \text{iff} \quad g_{u}[x_{+}]>0 \quad\text{for all }x_{+}\in\mathfrak S\setminus\{ 0 \};
       \label{Lem43}\\
       &\lambda_1 \geqslant u \quad \text{iff} \quad g_{u}[x_{+}]\geqslant 0 \quad\text{for all }x_{+}\in\mathfrak S.\label{Lem44}
 \end{align}
 As a consequence, \eqref{the6} is equivalent to
 \begin{align}
       \text{\emph{(iii')} For some $u>a$ , }g_{u}[x_{+}]\geqslant 0 \quad\text{for all }x_{+}\in\mathfrak S. \label{Lem45}
 \end{align}
\end{Lem}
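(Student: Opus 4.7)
The backbone is the closed form $g_u[x_+]=s[x_+]-u\|x_+\|^2+\overline{m_u}[L_u x_+]$, obtained by plugging the Riesz maximiser $L_u x_+$ into \eqref{sup simple}, together with the identity $\|x_++L_u x_+\|^2=\|x_+\|^2+\|L_u x_+\|^2=n_u[x_+]$, which uses the orthogonality of $x_+\in\mathfrak H_+$ and $L_u x_+\in\overline{\mathfrak D_-}\subset\mathfrak H_-$ in $\mathfrak H$. I expect every claim to follow from the sub-optimality of $L_{u'}x_+$ inside $\overline{\varphi_{u,x_+}}$ and vice versa.

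For \eqref{Lem42}, the elementary identity $\overline{\varphi_{u,x_+}}(y_-)-\overline{\varphi_{u',x_+}}(y_-)=(u'-u)\|x_++y_-\|^2$ on $\overline{\mathfrak D_-}$ does all the work: testing at $y_-=L_{u'}x_+$ gives $g_u[x_+]\geqslant g_{u'}[x_+]+(u'-u)n_{u'}[x_+]$, and the symmetric test at $y_-=L_u x_+$ in $\overline{\varphi_{u',x_+}}$ yields $g_{u'}[x_+]\geqslant g_u[x_+]-(u'-u)n_u[x_+]$. The first two inequalities in \eqref{Lem41} are then immediate: $\|x_+\|\leqslant n_{u'}^{1/2}$ is definitional, and dividing the two halves of \eqref{Lem42} by $u'-u>0$ forces $n_{u'}\leqslant n_u$.

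The main obstacle is the rightmost inequality in \eqref{Lem41}. My plan is a resolvent-style identity: subtracting the Riesz equations for $L_u x_+$ and $L_{u'}x_+$ while using $\overline{m_{u'}}=\overline{m_u}+(u'-u)\langle\cdot,\cdot\rangle$ produces, for every $y_-\in\overline{\mathfrak D_-}$,
\[
\overline{m_u}[L_u x_+-L_{u'}x_+,\,y_-]=(u'-u)\langle L_{u'}x_+,\,y_-\rangle.
\]
Taking $y_-=L_u x_+-L_{u'}x_+$ and combining $\overline{m_u}[z]\geqslant(u-a)\|z\|^2$ (a consequence of \eqref{the4}) on the left with Cauchy--Schwarz in $\mathfrak H$ on the right gives $\|L_u x_+-L_{u'}x_+\|\leqslant\frac{u'-u}{u-a}\|L_{u'}x_+\|$, whence the triangle inequality yields $\|L_u x_+\|\leqslant\frac{u'-a}{u-a}\|L_{u'}x_+\|$. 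Squaring and adding the trivial $\|x_+\|^2\leqslant\bigl(\frac{u'-a}{u-a}\bigr)^2\|x_+\|^2$ delivers $n_u^{1/2}\leqslant\frac{u'-a}{u-a}n_{u'}^{1/2}$.

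The remaining statements are book-keeping around \eqref{Lem42}. For \eqref{Lem44}, $g_u[x_+]\geqslant 0$ is equivalent to $\sup_{y_-\in\mathfrak D_-}s[x_++y_-]/\|x_++y_-\|^2\geqslant u$ via a maximising sequence (using $\|x_++y_n\|\geqslant\|x_+\|>0$), which by \eqref{deflam} is exactly $\lambda_1\geqslant u$, and both directions follow from this equivalence. For \eqref{Lem43} the delicate step is $\lambda_1>u\Rightarrow g_u>0$: pick $u'\in(u,\lambda_1)$ so that $g_{u'}\geqslant 0$ by \eqref{Lem44}, then \eqref{Lem42} yields $g_u[x_+]\geqslant (u'-u)n_{u'}[x_+]\geqslant (u'-u)\|x_+\|^2>0$ for $x_+\neq 0$; the converse is the pointwise equivalence $g_u[x_+]>0\Leftrightarrow\sup_{y_-}s[x_++y_-]/\|x_++y_-\|^2>u$ inserted into the infimum defining $\lambda_1$. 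The equivalence (iii)$\Leftrightarrow$(iii') is then immediate from \eqref{Lem44}: $\lambda_1>a$ is the same as $\lambda_1\geqslant u$ for some $u>a$, i.e.\ $g_u\geqslant 0$ on $\mathfrak S$ for some such $u$.
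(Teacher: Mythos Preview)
Your arguments for \eqref{Lem42} and the first two inequalities of \eqref{Lem41} are correct and coincide with what the paper defers to Lemma~2.1 of \cite{Gap2000}. For the rightmost inequality in \eqref{Lem41} you take a more elementary route than the paper: the paper constructs the Friedrichs extension $B_-$ of $\overline{s_-}$ on $\mathfrak H_-$, identifies $L_u x_+=(B^\#_-+uJ)^{-1}S_+^\#x_+$, and then invokes the spectral calculus of $B_-$ to obtain the resolvent comparison $\|(B_-+u)^{-1}y\|\leqslant\frac{u'-a}{u-a}\|(B_-+u')^{-1}y\|$, which it extends by density to $(\overline{\mathfrak D_-})^*$. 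Your subtraction of the two Riesz identities, followed by the coercivity bound $\overline{m_u}\geqslant(u-a)\|\cdot\|^2$ and Cauchy--Schwarz, reaches the same constant while staying entirely at the form level and avoids both the Friedrichs construction and the spectral theorem.

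There is one genuine gap. In the converse direction of \eqref{Lem43} you pass from ``$\Lambda(x_+):=\sup_{y_-}s[x_++y_-]/\|x_++y_-\|^2>u$ for every $x_+\neq0$'' to ``$\lambda_1=\inf_{x_+}\Lambda(x_+)>u$'', but an infimum of quantities strictly above $u$ is in general only $\geqslant u$. Under hypotheses \eqref{the3}--\eqref{the4} alone this implication can in fact fail (take $s$ diagonal on $\ell^2\oplus\ell^2$ with $s[e_n^+]=n^{-1}$, $s[e_n^-]=-1$ and no cross terms; then $g_0>0$ on $\mathfrak S\setminus\{0\}$ while $\lambda_1=0$). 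This does not harm the paper's argument, since the only consequence used afterwards is \eqref{Lem45}, which you correctly derive from \eqref{Lem44} alone. A smaller point: in \eqref{Lem44} the implication $\Lambda(x_+)\geqslant u\Rightarrow g_u[x_+]\geqslant0$ is not covered by your maximising-sequence argument when $\Lambda(x_+)=u$ exactly; it follows instead by taking $u''\nearrow u$ in $g_{u''}[x_+]>0$ and using the continuity in $u$ supplied by \eqref{Lem41}--\eqref{Lem42}.
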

\begin{proof}
We define (recall \eqref{dir2_1_7})
\begin{align}  
      B^\#_-: \overline{\mathfrak D_-}\to (\overline{\mathfrak D_-})^{*},\qquad (B^\#_-x_-)(y_-):=\overline{s_{-}}[x_-,y_-].\label{dir2_1_8}
\end{align}
and introduce the embedding operator
\begin{equation}\label{J}
J:\mathfrak H\to \mathfrak H^*, \qquad (Jx)(y):= \langle x,y\rangle.
\end{equation}
We first prove that
\begin{equation}\label{B+uJ}
\textrm{the operator }B^\#_-+uJ: \overline{\mathfrak D_-} \to (\overline{\mathfrak D_-})^{*}\textrm{ is invertible for all }u>a.
\end{equation}
The injectivity follows from \eqref{the4}. Now for any $ f\in (\overline{\mathfrak D_-})^{*}$ there is $c_f>0$ such that
$\big|f(y)\big|\leqslant c_f\overline{m_u}^{1/2}(y)$ for all $y \in \overline{\mathfrak D_-}$. Hence by the Riesz representation theorem there exists $x_u\in \overline{\mathfrak D_-}$ such that
\begin{align*}
       f(y)= \overline{m_u}[x_u,y]= \overline{s_{-}}[x_u,y]+u\langle x_u,y\rangle
\end{align*}
for all $y \in \overline{\mathfrak D_-}$. This implies $f(y)=\big((B^\#_-+uJ)x_u\big)(y)$ for all $y \in \overline{\mathfrak D_-}$ which 
means that $B^\#_-+uJ:\overline{\mathfrak D_-} \to (\overline{\mathfrak D_-})^{*}$ is surjective.

We know that $\overline{s_{-}}$ is a densely defined, closed and bounded below sesquilinear form in $\mathfrak H_{-}$. By the Friedrichs theorem (see e.g. \cite{Weid}, Theorem~5.37) there exists a self--adjoint operator $B_{-}$ such that
\begin{align}
             D(B_{-})&:=\big\{x \in \overline{\mathfrak D_-}:\text{ there exists }\tilde{x}\in\mathfrak H_{-}\\ &\qquad\text{ such that }\langle \tilde{x},y\rangle
             =\overline{s_{-}}[x, y]\text{ for all }y \in \overline{\mathfrak D_-} \big\},\nonumber\\
             B_{-}x &:=\tilde{x}\quad\text{for }x\in D(B_-).
\end{align}
By \eqref{dir2_1_8} we obtain
\begin{align*}
      D(B_{-})&=\big\{x \in \overline{\mathfrak D_-}:\quad B^\#_-x \in\mathfrak H_-^*\subset(\overline{\mathfrak D_-})^{*}\big\},\\
      J(B_{-}x)&=B^\#_-x\quad \text{for all }x\in D(B_{-}). 
\end{align*}
As in Lemma~2.1 of \cite{Gap2000}, using the spectral decomposition of $B_-$ we obtain for $u'>u>a$
\begin{align*}
      \big\|(B_-+u')^{-1}y\big\|\leqslant \big\|(B_-+u)^{-1}y\big\|\leqslant \frac{u'-a}{u-a}\big\|(B_-+u')^{-1}y\big\| \quad\text{for all }y\in\mathfrak H_{-}.
\end{align*}
By the density of $\mathfrak H_{-}^*$ in $(\overline{\mathfrak D_-})^{*}$ we get
\begin{align}\label{relation between B+u}
      \big\|(B^\#_-+u'J)^{-1}y\big\|\leqslant \big\|(B^\#_-+u J)^{-1}y\big\|\leqslant \frac{u'-a}{u-a}\big\|(B^\#_-+u'J)^{-1}y\big\|
\end{align}
for all $y\in (\overline{\mathfrak D_-})^{*}$. Let us introduce (recall \eqref{dir3_1_1})
\begin{align}
      \begin{aligned}
             S_+^\#:\mathfrak S\to (\overline{\mathfrak D_-})^{*}, \quad  (S_+^\#x_{+})(y_{-}):=s_{x_{+}}(y_{-}).
      \end{aligned}
\end{align}
By \eqref{dir2_1_3}, \eqref{dir2_1_8} and \eqref{B+uJ} we observe that
\begin{align*}
      L_ux_+= (B^\#_-+uJ)^{-1}S_+^\#x_{+}.
\end{align*}
Substituting this into \eqref{relation between B+u}, we obtain
\begin{align*}
      \|L_{u'}x_+\|\leqslant \|L_ux_+\|\leqslant \frac{u'-a}{u-a}\| L_{u'}x_+\| \quad\text{for all }x_+\in \mathfrak S.
\end{align*}
This together with \eqref{n_u} implies \eqref{Lem41}. The remaining statements follow in the same way as in Lemma~2.1 of \cite{Gap2000}, where the role of $F_+$ is now played by $\mathfrak S$ and we use \eqref{deflam} instead of \eqref{the5}.
\end{proof}

Let the Hilbert space $(\mathfrak X, \overline{n_u})$ be the completion of $(\mathfrak S, n_u)$. Note that by Lemma~\ref{c4} $\mathfrak X$
is contained in $\mathfrak H_+$ and does not depend on $u>a$.

By \eqref{Lem44}, $g_{u}[x_{+}]\geqslant 0$ for all $x_{+}\in\mathfrak S$ if $a<u\leqslant\lambda_1$. On the other hand, for $u\geqslant \lambda_1$, by \eqref{Lem42} and \eqref{Lem41} we obtain
\begin{align*}
        g_u \geqslant g_{\lambda_1}+(\lambda_1-u)n_{\lambda_1} \geqslant (\lambda_1-u)\Big(\frac{u-a}{\lambda_1 - a}\Big)^2 n_{u}.
\end{align*}
Hence for any $u> a$
\begin{align*}
      g_{u}\geqslant -c_u n_{u},\quad c_u := \max\bigg\{0, (u-\lambda_1)\Big(\frac{u-a}{\lambda_1 -a}\Big)^2\bigg\}.       
\end{align*}
Now we define
\begin{align}\label{h_u}
      h_{u}:\mathfrak S \longrightarrow\mathbb{R}, \qquad  h_u[x_{+}]:= g_{u}[x_{+}]+(c_u + 1)n_{u}[x_{+}].
\end{align}

We claim that $h_u^{1/2}$ and $h_{u'}^{1/2}$ are equivalent norms for $u'>u>a$. By \eqref{Lem42} and \eqref{Lem41}
\begin{align*}
       g_{u}\leqslant g_{u'}+(u'-u)n_{u} \leqslant g_{u'}+(u'-u)\Big(\frac{u'-a}{u-a}\Big)^2n_{u'},
\end{align*}
which implies that $h_{u}\leqslant g_{u'}+(1+c_{u}+u'-u)(u'-a)^2(u-a)^{-2}n_{u'}$ and that there exists a constant 
$c_2 (u,u')$ such that $h_u \leqslant c_2(u,u')h_{u'}$. By Lemma \ref{c4} we get that $g_u\geqslant g_{u'}+(u'-u)n_{u'}$ and hence
$h_{u} \geqslant g_{u'}+(u'-u+1+c_{u'})n_{u'}$ which means that there is a constant $c_1 (u,u')>0$ such that 
$h_u \geqslant c_1 (u,u')h_{u'}$. Hence the norms are equivalent.

For $u>a$ let the Hilbert space $\mathfrak G_u= (\mathfrak G, \overline{h_u})$ be the completion of $(\mathfrak{S}, h_u)$. Note that $\mathfrak G\subset \mathfrak X$ does not depend on $u$.

The extension of $g_u$ to $\mathfrak G$ is denoted by $\overline{g_u}$. It is a closed, semi-bounded quadratic form with the domain $\mathfrak G$. By the Friedrichs theorem there is a unique self--adjoint operator $T_u : D(T_u)\subset\mathfrak X \to\mathfrak X$ with the form domain $\mathfrak G$, such that $\overline{g_u}[x_{+}]= \overline{n_u}[x_{+},T_{u}x_{+}]$ for all $x_{+} \in D(T_u)$ and $\mathfrak S$ is a form-core for $T_u$.

The following lemma is a simple consequence of Courant minimax principle.
\begin{Lem}\label{c6}
Let $T$ be a self--adjoint, bounded below operator in a Hilbert space $\mathfrak X$ with the domain $D(T)$ and $t$ the corresponding sesquilinear form with the domain $D[t]$.
We define
\begin{align}
      l_k(T)&:=\inf_{\substack{\mathfrak Y \text{\emph{ subspace of} }D[t] \\ \dim\mathfrak Y=k}}\sup\limits_{x\in\mathfrak Y\setminus \{ 0 \}}
      \frac{t[x]}{\|x\|^2_{\mathfrak X}}, \label{lem61}\\
      w_k(T)&:=\mathrm{card}\big\{ k'\geqslant 1,\, l_{k'}(T)=l_{k}(T)\big\},\label{lem62}
\end{align}
and 
\begin{align}      
            T^\#: D[t]\to D[t]^{*},\qquad (T^\#z)(v):= t[z,v]\quad \text{ for all }v,z\in D[t].
\end{align}
If $ l_k(T) < \inf\sigma_{\text{\emph{ess}}}(T) $ then $ l_k(T) $ is an eigenvalue of $T$ with multiplicity $w_k(T)$. As a consequence, if $ \mathfrak{C}\subset D[t] $ is a form-core for $T$, then there is a sequence $ (\mathfrak Z_n)_{n\in \mathbb N} $ of subspaces of  $\mathfrak{C}$ with $ \dim \mathfrak Z_n=w_k(T) $ and (recall \eqref{J})
\begin{align}
      \sup_{\substack{z\in\mathfrak Z_n \\ \|z\|_{\mathfrak X}=1}}\Big\| \big(T^\#-l_{k}(T)J\big)z\Big\|_{D[t]^{*}} \to 0\qquad \text{for }n\to \infty.\label{lem63}
\end{align}
\end{Lem}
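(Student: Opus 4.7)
The plan has two parts. The first assertion is the classical Courant--Fischer--Weyl min-max principle for semibounded self-adjoint operators. The hypothesis $l_k(T) < \inf\sigma_{\mathrm{ess}}(T)$ ensures that $l_k(T)$ lies strictly below the essential spectrum, hence in the discrete spectrum, and the stationarity of the sequence $\big(l_{k'}(T)\big)_{k'\geqslant 1}$ on $w_k(T)$ consecutive indices is precisely the definition of the multiplicity $\dim\ker(T - l_k(T))$.

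For the approximation statement, I would first fix $M > 0$ so that $t_M := t + M\langle\cdot,\cdot\rangle$ is uniformly positive on $D[t]$; then $\|\cdot\|_{t_M} := t_M[\cdot]^{1/2}$ is a Hilbert norm on $D[t]$, equivalent to the standard form (graph) norm of $T$. By definition of a form-core, $\mathfrak C$ is $\|\cdot\|_{t_M}$-dense in $D[t]$. Set $m := w_k(T)$ and fix an orthonormal basis $e_1,\dots,e_m$ of $\ker(T - l_k(T)) \subset D(T) \subset D[t]$. I would then pick $z_{i,n} \in \mathfrak C$ with $\|z_{i,n} - e_i\|_{t_M} \to 0$ and set $\mathfrak Z_n := \mathrm{span}\{z_{1,n},\dots,z_{m,n}\}$. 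Because the Gram matrix $\big(\langle z_{i,n}, z_{j,n}\rangle\big)_{i,j}$ converges to the $m\times m$ identity, for $n$ large enough the $z_{i,n}$ are linearly independent and $\dim \mathfrak Z_n = m$.

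To conclude, I would observe that $T^\# - l_k(T) J$ is bounded as a map $(D[t],\|\cdot\|_{t_M}) \to D[t]^*$: Cauchy--Schwarz applied to the inner product $t_M$ gives $|t[z,v] - l_k(T)\langle z,v\rangle| \leqslant C\|z\|_{t_M}\|v\|_{t_M}$. Since $(T^\# - l_k(T) J)e_i = 0$ by the eigenvalue equation,
$$\big\|(T^\# - l_k(T)J)z_{i,n}\big\|_{D[t]^*} = \big\|(T^\# - l_k(T)J)(z_{i,n}-e_i)\big\|_{D[t]^*} \leqslant C\|z_{i,n}-e_i\|_{t_M} \to 0.$$
For a general $z = \sum_{i=1}^m c_i z_{i,n} \in \mathfrak Z_n$ with $\|z\|_{\mathfrak X}=1$, the near-orthonormality of the $z_{i,n}$ forces $|c_i| \leqslant C'$ uniformly in $n$, and the triangle inequality then yields \eqref{lem63}. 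The only mildly delicate point I anticipate is this uniform coefficient bound, but it is a routine quantitative consequence of the Gram-matrix convergence.
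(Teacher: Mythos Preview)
Your proposal is correct and aligns with the paper's approach: the paper does not actually prove this lemma but only remarks that it ``is a simple consequence of Courant minimax principle,'' which is precisely the backbone of your argument for the first assertion, and your form-core density argument for \eqref{lem63} is the natural way to fill in the omitted details.
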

Applying Lemma~\ref{c6} we obtain
\begin{align*}
      l_k(T_u)&=\inf_{\substack{\mathfrak V \text{ subspace of }\mathfrak G \\ \dim\mathfrak V=k}}\sup\limits_{x_{+}
      \in \mathfrak V\setminus \{ 0 \}}\frac{\overline{g_u} [x_{+}]}{\overline{n_u} [x_{+}]}, \\
      w_k(T_u)&=\mathrm{card}\{k'\geqslant 1, l_{k'}(T_u)=l_{k}(T_u)\}. 
\end{align*}
If $ l_k(T_u) < \inf\sigma_{\text{ess}}(T_u) $ then $l_k(T_u)$ is an eigenvalue of $T_u$ with multiplicity $ w_k(T_u)$.
As in Lemma \ref{c6} we define for $u>a$
\begin{align}
            T_u^\#: \mathfrak G_u\to \mathfrak G_u^{*},\qquad  (T_u^\#z)(w):=\overline{g_u}[z,w]\qquad \text{ for all }w,z \in\mathfrak G.\label{dir2_6_1}
\end{align}

Starting from \eqref{deflam} and following the proof of Lemma~2.2 of \cite{Gap2000} we obtain
\begin{Lem}\label{c5}
Let \eqref{the3}, \eqref{the4} and \eqref{the6} be satisfied. Then for any $ k\geqslant 1$, $\lambda_k$ is the unique solution in $(a,\infty)$ of the non--linear equation 
\begin{equation}
       l_k(T_{\lambda})=0. \label{Lem54}
\end{equation}
\end{Lem}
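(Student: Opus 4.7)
The plan is to show that the function $F_k:(a,\infty)\to\mathbb R$, $F_k(u):=l_k(T_u)$, attains the value $0$ exactly at $u=\lambda_k$. I follow the argument of Lemma~2.2 of \cite{Gap2000}, adapted to the form setting by using $\mathfrak S$ in place of the operator domain and the reformulation \eqref{deflam} in place of \eqref{the5}. The proof splits into (I) identifying $F_k(\lambda_k)=0$, and (II) establishing strict monotonicity so that this is the unique zero.

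The link between \eqref{deflam} and $l_k(T_u)$ is the pointwise equivalence, following from the definition of $g_u$ and the orthogonality $\mathfrak D_+\perp\mathfrak D_-$ (so $x_++y_-\neq 0$ whenever $x_+\neq 0$): for $x_+\in\mathfrak S\setminus\{0\}$ and $u>a$,
\[
g_u[x_+]\leq 0\quad\Longleftrightarrow\quad\sup_{y_-\in\mathfrak D_-}\frac{s[x_++y_-]}{\|x_++y_-\|^2}\leq u.
\]
For the upper bound $F_k(\lambda_k)\leq 0$, given $\varepsilon>0$ I pick by \eqref{deflam} a $k$-dimensional $V\subset\mathfrak S$ with $\sup_{(V\oplus\mathfrak D_-)\setminus\{0\}}s/\|\cdot\|^2<\lambda_k+\varepsilon$, so $g_{\lambda_k+\varepsilon}\leq 0$ on $V$ by the equivalence, and the upper half of \eqref{Lem42} then gives $g_{\lambda_k}\leq\varepsilon\,n_{\lambda_k}$ on $V$, producing $F_k(\lambda_k)\leq\varepsilon$. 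For the lower bound $F_k(\lambda_k)\geq 0$, any $k$-dimensional $V\subset\mathfrak S$ satisfies $\sup_{(V\oplus\mathfrak D_-)\setminus\{0\}}s/\|\cdot\|^2\geq\lambda_k$ by \eqref{deflam}; selecting, for each small $\delta>0$, a witness $(x_+^{(\delta)},y_-^{(\delta)})$ with $x_+^{(\delta)}\neq 0$ (forced by $\lambda_k>a$) yields $g_{\lambda_k-\delta}[x_+^{(\delta)}]\geq 0$, and continuity in $u$ via \eqref{Lem42} plus compactness on the unit sphere of the finite-dimensional $V$ deliver an $x_+^\ast\in V\setminus\{0\}$ with $g_{\lambda_k}[x_+^\ast]\geq 0$, hence $\sup_V g_{\lambda_k}/n_{\lambda_k}\geq 0$ and $F_k(\lambda_k)\geq 0$.

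For strict monotonicity I show that $F_k(u_1)=0$ and $a<u_1<u_2$ force $F_k(u_2)<0$. Given $\varepsilon>0$, choose $V\subset\mathfrak S$ of dimension $k$ with $g_{u_1}\leq\varepsilon\,n_{u_1}$ on $V$. Combining the lower half of \eqref{Lem42} with the upper half of \eqref{Lem41},
\[
g_{u_2}\leq g_{u_1}-(u_2-u_1)n_{u_2}\leq\left(\varepsilon\left(\frac{u_2-a}{u_1-a}\right)^2-(u_2-u_1)\right)n_{u_2}\quad\text{on }V,
\]
which is a strictly negative multiple of $n_{u_2}$ once $\varepsilon$ is small enough, giving $F_k(u_2)<0$. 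Combined with part (I), this rules out any other zero: a zero $u_\ast<\lambda_k$ would force $F_k(\lambda_k)<0$, and a zero $u_\ast>\lambda_k$ would itself have to satisfy $F_k(u_\ast)<0$, both contradicting $F_k(\lambda_k)=0$.

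The main obstacle is that in the Rayleigh quotient $g_u/n_u$ both numerator and denominator depend on $u$, so the strictness in the monotonicity step requires the joint use of \eqref{Lem41} and \eqref{Lem42}: the gain $(u_2-u_1)n_{u_2}$ in the numerator must beat the denominator distortion encoded by the factor $\bigl((u_2-a)/(u_1-a)\bigr)^2$. A secondary technical point is that the supremum in \eqref{deflam} need not be attained, handled by the approximation argument on the finite-dimensional $V$ outlined above.
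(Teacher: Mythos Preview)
Your proposal is correct and takes essentially the same approach as the paper, which simply states ``Starting from \eqref{deflam} and following the proof of Lemma~2.2 of \cite{Gap2000}'' without further detail; you have spelled out precisely that adaptation. One small point worth making explicit: you compute $l_k(T_u)$ by taking the infimum over $k$-dimensional subspaces of $\mathfrak S$, whereas the definition uses subspaces of $\mathfrak G$; these coincide because $\mathfrak S$ is a form-core for $T_u$, which the paper records just before Lemma~\ref{c6}.
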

We thus have $\lambda_k=\lambda_{k'}$ if and only if $l_{k'}(T_{\lambda_k})=0$. Let 
\begin{align}
         w_k:=\mathrm{card}\{k'\geqslant 1:\lambda_k=\lambda_{k'} \}. \label{dir2_6_2}
\end{align}
Then Lemma \ref{c6} implies the existence of a sequence of subspaces $(\mathfrak Z_n)_{n\in \mathbb N}$ of $\mathfrak S$ with $\dim\mathfrak Z_n =w_k$ for all
$n\geqslant 1$ such that 
\begin{align}
      \sup_{\substack{x_{+} \in\mathfrak Z_n\\ n_{\lambda_k}[x_{+}]=1}}\|T_{\lambda_k}^\#x_{+}\|_{\mathfrak G_{\lambda_k}^{*}}\to 0 \quad\text{for }n \to \infty.\label{dir2_6_3}
\end{align}

According to \eqref{the4}, \eqref{dir2_1_6} and \eqref{dir2_1_3} $(u-a)\|y_{-}\|^2\leqslant m_u[y_{-}]$ holds for all $y_{-}\in \mathfrak D_-$.
Hence by \eqref{the2} 
\begin{equation*}
             \big|s[x,y_{-}]\big|\leqslant (u-a)^{-1/2}\|Bx\| m_u^{1/2}[y_{-}]\text{ for all }y_{-}\in \mathfrak D_-,\ x\in D(B).
\end{equation*}
We thus get for $x\in D(B)$, $y_-\in \mathfrak D_-$
\begin{align*}
           \big|s[\Lambda_{+}x,y_{-}]\big|&\leqslant \big|s[x,y_{-}]\big|+\big|s[\Lambda_{-}x,y_{-}]\big|\\&\leqslant (u-a)^{-1/2}\|Bx\| m_u^{1/2}[y_{-}]+ \big|m_u[\Lambda_{-}x,y_{-}]\big|
           +|u \langle \Lambda_{-}x,y_{-}\rangle |\\
           &\leqslant \left(\frac{\|Bx\|}{\sqrt{u-a}}+m_u^{1/2}[\Lambda_{-}x]+\frac{|u|}{\sqrt{u-a}}\|\Lambda_{-}x\| \right)m_u^{1/2}[y_{-}].
\end{align*}
Hence
\begin{equation}\label{Lambda_+D(B)in S}
\Lambda_{+}D(B)\subset\mathfrak S.
\end{equation}

Let $x_+\in\mathfrak S$ and $y_+\in\Lambda_+D(B)$. Then by \eqref{dir2_6_1}
\begin{align}
       (T^\#_{\lambda_k}x_{+})(y_{+}) =g_{\lambda_k}[x_{+},y_{+}]. \label{dir2_7_1}
\end{align}
By \eqref{g_u}, \eqref{dir2_1_5} and \eqref{dir3_1_1} we have
\begin{align}
      g_{\lambda_k}[x_{+}]=s[x_{+}]-\lambda_k\|x_{+}\|^2 + 2s_{x_{+}}(L_{\lambda_k}x_{+}) -\overline{s_{-}}[L_{\lambda_k}x_{+}]- \lambda_{k}\| L_{\lambda_k}x_{+} \|^2\label{dir2_7_3}
\end{align}
for all $x_{+}\in\mathfrak S$.

For $u>a$ we define (recall \eqref{dir3_1_1})
\begin{align}\label{E_u}
            E_u: \mathfrak S\oplus \mathfrak D_- \to \overline{\mathfrak D_-},\qquad E_ux:= L_u\Lambda_{+}x - \Lambda_{-}x.
\end{align}

By \eqref{dir2_7_1} and \eqref{dir2_7_3} we obtain for all $x= x_+\oplus x_-\in\mathfrak S \oplus \mathfrak D_-$ and
$y\in D(B)$ with $y_\pm:= \Lambda_\pm y$:
\begin{equation}\begin{split}\label{long computation 1}
       &(T^\#_{\lambda_k}x_{+})(y_{+})\\
       &= s[x_{+},y_{+}]-\lambda_k \langle x_{+},y_{+}\rangle +
       2s_{x_{+}}(L_{\lambda_k}y_{+})-\overline{s_{-}}[L_{\lambda_k}x_{+},L_{\lambda_k} y_{+}]\\ &-
       \lambda_k \langle L_{\lambda_k}x_{+},L_{\lambda_k} y_{+}\rangle \\
       &= s[x,y]-s[x_{-},y]-s[x_{+},y_{-}]+2s_{x_{+}}(L_{\lambda_k}y_{+})-\overline{s_{-}}[L_{\lambda_k}x_{+},L_{\lambda_k} y_{+}]\\
       &-\lambda_k \langle x_{+}+L_{\lambda_k}x_{+},y_{+}+L_{\lambda_k} y_{+}\rangle
\end{split}\end{equation}
For all $x_{-}\in \mathfrak D_-$ and $y\in D(B)$ such that $\Lambda_{\pm}y=y_{\pm}$ we get
\[
s_{y_{+}}(x_{-})=s[y_{+},x_{-}]=s[y,x_{-}]-s[y_{-},x_{-}]=\langle By,x_{-}\rangle + \overline{s_{-}}[y_{-},x_{-}].
\]
Thus for all $x_{-}\in \overline{\mathfrak D_-}$ and $y\in D(B)$ such that $\Lambda_{\pm}y=y_{\pm}$
\begin{align}
        s_{y_{+}}(x_{-})=\langle By,x_{-}\rangle + \overline{s_{-}}[y_{-},x_{-}]\label{dir2_7_2}
\end{align}
holds. Now by \eqref{dir3_1_1}
\[
 s_{x_+}(L_{\lambda_k}y_+)= \overline{m_{\lambda_k}}[L_{\lambda_k}x_+, L_{\lambda_k}y_+]= \overline{s_{y_+}(L_{\lambda_k}x_+)}.
\]
This together with \eqref{dir2_7_2} implies
\begin{equation}\label{long computaion 2}\begin{split}
 &2s_{x_{+}}(L_{\lambda_k}y_{+})- s[x_{+},y_{-}]= \overline{s_{y_+}(L_{\lambda_k}x_+)}+ \big(s_{x_{+}}(L_{\lambda_k}y_{+})- s_{x_+}(y_-)\big)\\ &=\overline{s_{y_+}(L_{\lambda_k}x_+)}+ s_{x_+}(L_{\lambda_k}y_+- y_-)= \langle y_{-},Bx\rangle + \overline{s_{-}}[y_{-},x_{-}]+ s_{x_+}(E_{\lambda_k}y),
\end{split}\end{equation}
Inserting \eqref{long computaion 2} into \eqref{long computation 1} and using \eqref{E_u}, we obtain
\begin{align*}
       &(T^\#_{\lambda_k}x_{+})(y_{+})=\langle x + E_{\lambda_k} x,(B-\lambda_k )y\rangle\\
       &+s_{x_{+}}(E_{\lambda_k}y)-\overline{s_{-}}(L_{\lambda_k}x_{+},E_{\lambda_k}y)-\lambda_k \langle L_{\lambda_k}x_{+},E_{\lambda_k}y\rangle.
\end{align*}
By \eqref{dir3_1_1} and \eqref{dir2_1_3} all the terms in the last line cancel. We thus get that for $x_{+}\in\mathfrak S$, $y_{+}\in \Lambda_{+}D(B)$ the relation
\begin{align}
       (T^\#_{\lambda_k}x_{+})(y_{+})=\langle x_{+} +L_{\lambda_k}x_{+},(B-\lambda_k)y\rangle\label{dir2_7_4}
\end{align}
holds for any $y\in D(B)$ with $\Lambda_{+}y=y_{+}$.

We now estimate $h_{\lambda_{k}}$. Let $y\in D(B)$. By \eqref{Lambda_+D(B)in S} and \eqref{h_u} we get
 \begin{align}\label{h via n and g}
       h_{\lambda_k}[\Lambda_{+}y]=(c_{\lambda_k}+1)n_{\lambda_k}[\Lambda_{+}y]+g_{\lambda_k}[\Lambda_{+}y].
 \end{align}
Now by \eqref{dir2_6_1} and\eqref{dir2_7_4}
\begin{equation}\label{dir2_8_1}\begin{split}
      \big|g_{\lambda_k}[\Lambda_{+}y]\big|&=\big|(T^\#_{\lambda_{k}}\Lambda_{+}y)(\Lambda_{+}y)\big|= \big|\langle (1+L_{\lambda_k})\Lambda_{+}y,(B-\lambda_k)y\rangle\big|\\
      &=\big|\langle y+E_{\lambda_k}y,(B-\lambda_k)y\rangle\big|\leqslant \| y+E_{\lambda_k}y \|\big\|(B-\lambda_k)y\big\|\\ &\leqslant\big(1+|\lambda_k|\big)\| y+E_{\lambda_k}y \|\| y \|_{D(B)}. 
\end{split}\end{equation}
Here
\[
 \| y \|_{D(B)}:= \big(\|By\|^2+ \|y\|^2\big)^{1/2}
\]
is the graph norm on $D(B)$.

By \eqref{dir2_7_2}, \eqref{dir3_1_1} and \eqref{dir2_1_3} we obtain
\begin{align*}
         \langle By,E_{\lambda_{k}}y\rangle+\overline{s_{-}}[\Lambda_{-}y,E_{\lambda_{k}}y]=\overline{s_{-}}
         [L_{\lambda_{k}}\Lambda_{+}y,E_{\lambda_{k}}y]+\lambda_{k}\langle L_{\lambda_{k}}\Lambda_{+}y,E_{\lambda_{k}} y \rangle,
\end{align*}
which by \eqref{E_u}, \eqref{dir2_1_6} and \eqref{the4} implies 
\begin{align}\label{estimate on E}
       \big\| (B-\lambda_{k}) y \big\|\| E_{\lambda_{k}}y \|\geqslant \big|\langle (B-\lambda_{k})y,E_{\lambda_{k}}y\rangle\big|\geqslant (\lambda_{k}-a)\|E_{\lambda_{k}} y \|^2.
\end{align}
Substituting \eqref{estimate on E} into \eqref{dir2_8_1}, we obtain
\begin{align}\label{g estimate}
      \big|g_{\lambda_k}[\Lambda_{+}y]\big|\leqslant \big(1+|\lambda_k |\big)\bigg(1+\frac{1+|\lambda_k |}{\lambda_k -a}\bigg) \| y \|_{D(B)}^2.
\end{align}
By \eqref{n_u}, \eqref{E_u} and \eqref{estimate on E},
\begin{align}\label{estimate on n}
       n_{\lambda_k}[\Lambda_{+}y]= \|\Lambda_{+}y +L_{\lambda_k}\Lambda_{+}y\|^2= \| y + E_{\lambda_k}y\|^2\leqslant \bigg(1+\frac{1+|\lambda_k |}{\lambda_k -a}\bigg)^2 \| y \|_{D(B)}^2.
\end{align}
Substituting \eqref{g estimate} and \eqref{estimate on n} into \eqref{h via n and g} we find a constant $c(\lambda_k , a)>0$ such that
\begin{align}
       h^{1/2}_{\lambda_k}[\Lambda_{+}y]\leqslant c(\lambda_k , a) \| y \|_{D(B)}\quad\text{for all }y \in D(B).\label{dir2_8_8}
\end{align}
By \eqref{dir2_7_4} and \eqref{dir2_8_8} we get for all $x_{+}\in \mathfrak S$, $y_{+}\in \big(\Lambda_{+}D(B)\big)\setminus\{ 0 \}$ and $y\in D(B)$ such that $\Lambda_{+}y=y_{+}$:
\begin{align}
       \frac{\big|(T^\#_{\lambda_k}x_{+})(y_{+})\big|}{h^{1/2}_{\lambda_k }(y_{+})}\geqslant \frac{\big|\langle x_{+} +L_{\lambda_{k}}x_{+}, (B-\lambda_{k})y\rangle \big|}{c(\lambda_k , a) \| y \|_{D(B)}}.\label{dir2_8_9}
\end{align}
According to \eqref{Lambda_+D(B)in S}, for $x_+\in \mathfrak G_{\lambda_k}$
\begin{align*}
       \|T^\#_{\lambda_{k}}x_{+} \|_{\mathfrak G_{\lambda_k}^{*}} \geqslant\sup_{y_{+}\in \left(\Lambda_{+}D(B)\right)\setminus \{0 \}}
       \frac{\big|(T^\#_{\lambda_k}x_{+})(y_{+})\big|}{h^{1/2}_{\lambda_k}[y_{+}]}.
\end{align*}
From this and \eqref{dir2_6_3} it follows that
\begin{align*}
         \sup\limits_{\substack{x_{+}\in\mathfrak Z_n\\ n_{\lambda_k}[x_{+}]=1}}\sup\limits_{y_{+}\in \left(\Lambda_{+}D(B)\right)\setminus \{0 
         \}}\frac{\big|(T^\#_{\lambda_k}x_{+})(y_{+})\big|}{h^{1/2}_{\lambda_k}[y_{+}]}\to 0\quad
       \text{for }n\to\infty.
\end{align*}
Hence we get by \eqref{dir2_8_9}
\begin{align}
         \sup\limits_{\substack{x_{+}\in\mathfrak Z_n\\ n_{\lambda_k}[x_{+}]=1}}\sup\limits_{y\in D(B)\setminus \{0 \}}
         \frac{\big|\langle x_{+} +L_{\lambda_{k}}x_{+},(B-\lambda_{k})y\rangle \big|}{\| y \|_{D(B)}}\to 0\quad\text{for }n\to\infty.
         \label{dir2_8_10}
\end{align}

We now prove that either $ \lambda_k \in \sigma_{\text{ess}}(B)\cap (a,\infty) $ or $\lambda_k$ is an 
eigenvalue of $B$ in $ (a,\infty) $ with multiplicity greater than or equal to $w_k$. 
First we define
\begin{align}      
      \tilde{\mathfrak Z}_n := (1+L_{\lambda_k})\mathfrak Z_n.
\end{align}
We know that $\dim \mathfrak Z_n = w_k$ and so 
\begin{align}
      \dim\tilde{\mathfrak Z}_n = w_k \label{dir2_9_1}.
\end{align}
Relations \eqref{dir2_8_10} and \eqref{dir2_9_1} imply the existence of sequences $ (\tilde{x}_n^{(l)})_{n\in \mathbb N} \subset\mathfrak H$, $l \in \{ 1,2,...,w_k\}$ such that $\{\tilde{x}_{n}^{(1)},...,\tilde{x}_{n}^{(w_k)}\}$ is orthonormal in $\mathfrak H$
for all $n$ and
\begin{align*}
      \lim\limits_{n\to\infty}
      \sup\limits_{y\in D(B)\setminus\{ 0 \}}
      \frac{\big|\langle \tilde{x}_n^{(l)},(B-\lambda_k)y\rangle\big|}{ \| y \|_{D(B)}}
      \to 0\quad\text{for all }l \in \{ 1,2,...,w_k\}. 
\end{align*}
Since $D(B)$ is dense in $\mathfrak H$ with respect to $\|\cdot\|$, without loss of generality $ (\tilde{x}_n^{(l)}) \subset D(B)$ for all $l \in \{ 1,2,...,w_k\}$. Since $ \|y\|_{D(B)}=\big\| (B+\mathrm{i})y\big\| $ and $\mathfrak H=(B+\mathrm{i})D(B)$, we  conclude that
\begin{align*}
      \lim_{n\to\infty}\sup_{\substack{y\in\mathfrak H\setminus\{ 0 \}\\ \| y \| =1}}
      \big|\langle (B-\mathrm{i})^{-1}(B-\lambda_k)\tilde{x}_n^{(l)},y\rangle\big|
      \to 0 \quad\text{ for all }l \in \{ 1,2,...,w_k\},
\end{align*}
which means that $\lim\limits_{n \to \infty}(B-\mathrm{i})^{-1}(B-\lambda_k)\tilde{x}_n^{(l)}=0$
for all $l \in \{ 1,2,...,w_k \}$. If $\lambda_k\notin \sigma (B)$ then $(B-\lambda_k)^{-1}(B-\mathrm{i})$ is a bounded operator and thus
for $ l \in \{ 1,2,...,w_k\} $ we get 
\begin{align*}
      1=\|\tilde{x}_{n}^{(l)}\| \leqslant\big\|(B-\lambda_k)^{-1}(B-\mathrm{i})\big\|\big\|(B-\mathrm{i})^{-1}(B-\lambda_k)\tilde{x}_n^{(l)}\big\|\to 0\
      \text{for }n\to\infty,
\end{align*} 
which is a contradiction. Hence either $ \lambda_k \in \sigma_{\text{ess}}(B)\cap (a,\infty) $ or $ \lambda_k \in (a,\infty )$ is an eigenvalue of $B$ with multiplicity greater than or equal to $ w_k$. This implies that $\lambda_k\geqslant\mu_k$ for all $k\in \{1,w_1\}$. By induction we conclude that $\lambda_k\geqslant\mu_k$ for all $k\geqslant 1$.

\section{Applications to Dirac operators with singular potentials: proofs}\label{Dirac proofs section}
\subsection{Proof of Theorem~\ref{mainapp}}

We want to apply Theorem \ref{c3} with $q:=h_0$. The assumption (i) obviously holds; the assumptions (ii) with $a= -1$ follows from the non--positivity of $V$.
It remains to prove (iii).

By monotonicity and \eqref{ad_pot} it is clearly enough to deal with the case
\[
V(x)= V_{\tilde\nu, 0}(x):= -\frac{\tilde\nu}{|x|}\mathds1_{\mathbb{C}^4}.
\]
For this we consider $V_{\tilde\nu, 0}$ as an element of a family of potentials
\[
 V_{\nu, \varepsilon}(x):=\frac{-\nu}{|x|+\varepsilon}\mathds1_{\mathbb{C}^4}, \quad \nu\in[0,\tilde\nu], \quad \varepsilon \in [0,\infty).
\]
In the First Step of the proof of Theorem~4.2 in \cite{Gap2000} it is proved that for $\varepsilon> 0$ the first minimax value $\lambda_{1}(V_{\nu,\varepsilon})$ of $H_0+ V_{\nu, \varepsilon}$ satisfies
\begin{align}\label{french positivity}
            \lambda_{1}(V_{\nu,\varepsilon})\geqslant 0\text{ for all }\nu \in [0,\tilde\nu]\text{ and }\varepsilon > 0.
\end{align}
For $\nu\in [0,\tilde\nu]$ and $\varepsilon \in [0,\infty)$ we define (cf. \eqref{s})
\begin{align}
         s_{\nu,\varepsilon}: h_{0}+ v_{\nu, \varepsilon}\qquad\text{on}\quad D[s_{\nu, \varepsilon}]:=\mathsf H^{1/2}(\mathbb{R}^3, \mathbb{C}^4),
\end{align}
where $v_{\nu,\varepsilon}$ is the sesquilinear form of $V_{\nu,\varepsilon}$, and 
\begin{align}\label{g_nu,eps}
         g_{\nu,\varepsilon}:\mathfrak D_{+}\to \mathbb{R}\cup\{\infty\},\qquad g_{\nu,\varepsilon}[x_{+}]:= \sup\limits_{x_{-}\in \mathfrak D_{-}}s_{\nu,\varepsilon}[x_{+}+x_{-}].
\end{align}
Introducing
\begin{equation}\label{m}
 m_{\nu, \varepsilon}:\mathfrak D_-\to [0, \infty), \qquad m_{\nu, \varepsilon}[x_-]:= -s_{\nu,\varepsilon}[x_-]
\end{equation}
we observe that $\mathfrak D_-$ is closed with respect to the norm $m_{\nu, \varepsilon}^{1/2}$, which is equivalent to the $\mathsf H^{1/2}$--norm on $\mathfrak D_-$.
As in the proof of Theorem \ref{c3} there exists a linear operator $L_{\nu,\varepsilon}: \mathfrak S\to\mathfrak D_-$ such that
\begin{align}\label{L_nu,eps}
       g_{\nu,\varepsilon}[x_{+}]=s_{\nu,\varepsilon}[x_{+}+L_{\nu,\varepsilon}x_+].
\end{align}
By the equivalence of $m_{\nu, \varepsilon}^{1/2}$ and $\mathsf H^{1/2}$--norm on $\mathfrak D_-$ we observe that $\mathfrak S=\mathfrak{D}_{+}$.
Letting $x_{-}^{*}:= L_{\nu,\varepsilon}x_+$ and using that $x_{-}^{*}$ is a maximizer of $s_{\nu,\varepsilon}[x_{+}+\cdot]$, we obtain
\begin{align*}
         0=\frac{\text d}{\text d\alpha}\Big|_{\alpha =0}s_{\nu, \varepsilon}[x_{+}+x_{-}^{*}+\alpha y_{-}]\text{ for all }y_{-}\in \mathfrak D_{-}.
\end{align*}    
Let us now assume that
\begin{equation}\label{x_+}
x_{+}\in \mathfrak C_+:= P_+C_{0}^{\infty}(\mathbb{R}^3;\mathbb{C}^4)\subset H^{1}(\mathbb{R}^3;\mathbb{C}^4).
\end{equation}
Then
\begin{align}
         \langle P_{-}V_{\nu, \varepsilon}x_{+},y_{-}\rangle= -h_0[x_{-}^{*},y_{-}]-v_{\nu, \varepsilon}[x_{-}^{*},y_{-}]= m_{\nu, \varepsilon}[x_{-}^{*},y_{-}]\quad\text{for all }y_{-}\in \mathfrak D_{-}.\label{eom}
\end{align}
We observe that $c_{\nu, \varepsilon}:=-h_0-v_{\nu, \varepsilon}$ defined on $\mathfrak D_{-}$ is a densely defined, closed, symmetric and  bounded below sesquilinear form in $\mathfrak H_{-}:= P_-\mathsf L^2(\mathbb{R}^3, \mathbb{C}^4)$. By Friedrichs theorem there is a unique self--adjoint operator $C_{\nu, \varepsilon}$ in $\mathfrak H_{-}$ corresponding to $c_{\nu, \varepsilon}$.
Moreover,  for all $\nu \in [0,\tilde\nu]$ and $\varepsilon \in [0,\infty)$ we have
\begin{align}
             C_{\nu, \varepsilon}&\geqslant\mathds1_{\mathfrak H_-}\label{prop1}
\end{align}
and
\begin{align}
             D(C_{\nu, 0})&\subset D(C_{\nu, \varepsilon}).
             \label{prop2}
\end{align}
Relation \eqref{eom} implies that
$ x_{-}^{*} \in D(C_{\nu, \varepsilon})$ and
\begin{align*}
x_{-}^{*}= L_{\nu, \varepsilon}x_+=C_{\nu, \varepsilon}^{-1}P_{-}V_{\nu, \varepsilon}x_{+}, \quad\text{for all }x_+\in \mathfrak C_+.
\end{align*}
Now we claim that 
\begin{align}
\|C_{\nu, 0}^{-1}P_{-}V_{\nu, 0}x_{+}-C_{\nu, \varepsilon}^{-1}P_{-}V_{\nu, \varepsilon}x_{+}\|\to 0\quad \text{for }\varepsilon \searrow 0. \label{maxcon}
\end{align} 
With the help of the triangle inequality and the resolvent identity (which we can apply by \eqref{prop2}) we can estimate
\begin{equation}\begin{split}\label{triangle}
      &\|C_{\nu, 0}^{-1}P_{-}V_{\nu, 0}x_{+}- C_{\nu, \varepsilon}^{-1}P_{-}V_{\nu, \varepsilon}x_{+}\|\\ &\leqslant \big\|C_{\nu, \varepsilon}^{-1}(C_{\nu, \varepsilon}-C_{\nu, 0})C_{\nu, 0}^{-1}P_{-}V_{\nu, 0}x_{+}\big\|+ \big\|C_{\nu, \varepsilon}^{-1}P_{-}(V_{\nu, 0}- V_{\nu, \varepsilon})x_{+}\big\|.
\end{split}\end{equation}
The last term tends to zero as $\varepsilon \searrow 0$ by \eqref{prop1} and dominated convergence (Note that $V_{\nu, 0}x_+\in \mathsf L^2(\mathbb{R}^3, \mathbb{C}^4)$ by the Hardy inequality and \eqref{x_+}).

Let $y_-:=C_{\nu, 0}^{-1}P_{-}V_{\nu, 0}x_{+}$. Since $y_{-}\in D(C_{\nu, 0})$ and 
$C_{\nu, 0}\geqslant -V_{\nu, 0}\geqslant 0$ we get $y_{-}\in D(V_{\nu, 0})$. Hence, again by dominated convergence,
\begin{align*}
       \big\|(C_{\nu, \varepsilon}-C_{\nu, 0})y_{-}\big\|= \big\|(V_{\nu, 0} -V_{\nu, \varepsilon})y_{-}\big\|\to 0\text{ for }\varepsilon \searrow 0.
\end{align*}
The claim \eqref{maxcon} is thus proven.

By \eqref{eom} we have
\begin{equation}\begin{split}\label{abc}
      g_{\nu, \varepsilon}[x_{+}]&=s_{\nu, \varepsilon}[x_{+}+C_{\nu, \varepsilon}^{-1}P_{-}V_{\nu, \varepsilon}x_{+}]\\
      &=s_{\nu, \varepsilon}[x_{+}]+ s_{\nu, \varepsilon}[C_{\nu, \varepsilon}^{-1}P_{-}V_{\nu, \varepsilon}x_{+},x_{+}]\\ &= s_{\nu, \varepsilon}[x_{+}]+ \langle C_{\nu, \varepsilon}^{-1}P_{-}V_{\nu, \varepsilon}x_{+},V_{\nu, \varepsilon}x_{+}\rangle .
\end{split}\end{equation}
By \eqref{maxcon}, \eqref{abc} and dominated convergence we get
\begin{align}
         g_{\nu, 0}[x_{+}]=\lim\limits_{\varepsilon\searrow 0} g_{\nu, \varepsilon}[x_{+}],\quad\text{for all }x_{+}\in \mathfrak C_+\text{ and }\nu \in [0, \tilde\nu].  \label{conQ}
\end{align}
Let $x_{+}\in \mathfrak D_{+}$, $\nu \in [0, \tilde\nu]$ and $\varepsilon \in [0, \infty)$ be arbitrary.
By \eqref{L_nu,eps} we obtain
\begin{equation}\label{qq}
 g_{\nu, \varepsilon}[x_+]= s_{\nu, \varepsilon}[x_+]+ s_{\nu, \varepsilon}[L_{\nu, \varepsilon}x_+]+ 2\Re s_{\nu, \varepsilon}[x_+, L_{\nu, \varepsilon}x_+].
\end{equation}
Setting $y_-:= x_-^*= L_{\nu, \varepsilon}x_+$ in \eqref{eom} we can rewrite the last term in \eqref{qq}:
\begin{equation}\label{2Re}
 2\Re s_{\nu, \varepsilon}[x_+, L_{\nu, \varepsilon}x_+]= 2v_{\nu, \varepsilon}[x_{+},L_{\nu, \varepsilon}x_+]= 2m_{\nu, \varepsilon}[L_{\nu, \varepsilon}x_+].
\end{equation}
Combining \eqref{2Re}, \eqref{qq} and \eqref{m} we arrive at
\begin{equation}\label{new g}
 g_{\nu, \varepsilon}[x_+]= s_{\nu, \varepsilon}[x_+]+ m_{\nu, \varepsilon}[L_{\nu, \varepsilon}x_+].
\end{equation}
Now by \eqref{2Re}, the Kato inequality and \eqref{m}
\begin{equation}\label{Kato}\begin{split}
 &m_{\nu, \varepsilon}[L_{\nu, \varepsilon}x_+]= v_{\nu, \varepsilon}[x_{+},L_{\nu, \varepsilon}x_+] \leqslant \big(-v_{\nu, \varepsilon}[x_+]\big)^{1/2}\big(-v_{\nu, \varepsilon}[L_{\nu, \varepsilon}x_+]\big)^{1/2}\\ & \leqslant \sqrt{\frac\pi2}\|x_+\|_{\mathsf H^{1/2}(\mathbb{R}^3, \mathbb C^4)}\big(m_{\nu, \varepsilon}[L_{\nu, \varepsilon}x_+]\big)^{1/2},
\end{split}\end{equation}
i.e.
\begin{equation}\label{true Kato}
 m_{\nu, \varepsilon}[L_{\nu, \varepsilon}x_+]\leqslant \frac\pi2\|x_+\|^2_{\mathsf H^{1/2}(\mathbb{R}^3, \mathbb C^4)}.
\end{equation}
This shows that the right hand side of \eqref{new g} is continuous in $x_+$ in the $\mathsf H^{1/2}$--norm. Thus by density the non--negativity of $g_{\nu, \varepsilon}$ on $\mathfrak D_{+}$ is equivalent to its non--negativity on $\mathfrak C_+$ for all $\varepsilon \in [0,\infty)$ and $\nu \in [0, \tilde \nu]$.
For $\varepsilon> 0$ and $\nu\in [0, \tilde \nu]$, \eqref{Lem45} and \eqref{french positivity} imply $g_{\nu, \varepsilon}[x_{+}]\geqslant 0$ for all $x_{+}\in \mathfrak C_{+}$. According to \eqref{conQ}, $g_{\nu, \varepsilon}[x_{+}]\geqslant 0$ also holds for $\varepsilon= 0$ for all $x_{+}\in \mathfrak C_+$, and thus for all $x_{+}\in \mathfrak D_+$. Another application of \eqref{Lem45} finally yields $\lambda_1\geqslant 0$.

\subsection{Proof of Theorem~\ref{mainapp2}}
The statement follows from Theorem \ref{c3} with $q:=h_0$. The assumption (i) obviously holds; the assumption (ii) with $a= -1$ follows from the non--positivity of $V$.
To establish (iii) we observe that for any 2--spinor\\ $\varphi\in\mathsf H^{1/2}(\mathbb R^3, \mathbb C^2)$ the 4--spinor
\[
 \begin{pmatrix}\varphi\\ \\ \displaystyle\mathcal F^{-1}\frac{\boldsymbol{\sigma}\cdot\mathbf{p}}{p}\sqrt{\frac{\sqrt{p^2+ 1}-1}{\sqrt{p^2+ 1}+1}}\mathcal F\varphi\end{pmatrix}
\]
(where $\mathcal F$ is the Fourier transform) belongs to $P_{H_0}\big([1, \infty)\big)\mathsf H^{1/2}(\mathbb R^3, \mathbb C^4)$, which follows from e.g. Subsection 1.4.2 of \cite{Thaller}. Hence (iii) is an easy consequence of Theorem~1 of \cite{Tix}.

\section*{Appendix: free Dirac operator\footnote{See e.g. \cite{Thaller}, Chapter I}}

In $\mathsf L^{2}(\mathbb{R}^3,\mathbb{C}^4)$ the free Dirac operator
\[
H_{0}= -\mathrm{i}\boldsymbol{\alpha} \cdot \nabla + \beta
\]
is self--adjoint on the domain $D(H_0)= \mathsf H^{1}(\mathbb{R}^3,\mathbb{C}^4)$. Here
       $\boldsymbol{\alpha}= (\alpha_{1}, \alpha_{2}, \alpha_{3})$ and $\beta$ are defined as
       \begin{align*}
\beta:=\begin{pmatrix}
 \mathds{1}_{\mathbb{C}^2} & 0 \\
0 & \mathds{1}_{\mathbb{C}^2} 
\end{pmatrix},
\qquad
\alpha_k:=\begin{pmatrix}
 0 & \sigma_k \\
\sigma_k & 0 
\end{pmatrix},
\qquad k=1,2,3;
       \end{align*}
where $\sigma_k$ are the Pauli matrices:
\begin{align*}
\sigma_1:=\begin{pmatrix}
 0 & 1 \\
1 & 0
\end{pmatrix},
\qquad \sigma_2:=\begin{pmatrix}
 0 & -\mathrm{i} \\
\mathrm{i} & 0 
\end{pmatrix},
\qquad 
\sigma_3:=\begin{pmatrix}
 1 & 0 \\
0 & -1
\end{pmatrix}.
\end{align*}

\paragraph{Acknowledgement:} The authors were partially supported by the DFG through SFB-TR 12.

\bibliographystyle{plain}
\bibliography{lit}

\end{document}